\renewcommand{\paragraph}{%
  \@startsection{paragraph}{4}%
  {\z@}{2.5ex \@plus 1ex \@minus .2ex}{-1em}%
  {\normalfont\normalsize\bfseries}%
}
\definecolor{darkgreen}{rgb}{0,0.5,0}
\definecolor{darkblue}{rgb}{0,0,0.8}
\newtheorem{theorem}{Theorem}[section]
\newtheorem{lemma}[theorem]{Lemma}
\newtheorem{corollary}[theorem]{Corollary}
\newtheorem{definition}{Definition}[section]
\newcommand{\calA}{\ensuremath{\mathcal{A}}}
\newcommand{\calC}{\ensuremath{\mathcal{C}}}
\newcommand{\ignore}[1]{}
\algnewcommand\algorithmicswitch{\textbf{switch}}
\algnewcommand\algorithmiccase{\textbf{case}}
\newcommand{\CONGEST}{\ensuremath{\mathsf{CONGEST}}\xspace}
\newcommand{\LOCAL}{\ensuremath{\mathsf{LOCAL}}\xspace}
\newcommand{\SETLOCAL}{\ensuremath{\mathsf{SET}\text{-}\mathsf{LOCAL}}\xspace}
\newcommand{\ID}{\ensuremath{\mathrm{ID}}\xspace}
\newcommand{\eps}{\varepsilon}
\newcommand{\set}[1]{\left\{#1\right\}}
\DeclareMathOperator{\poly}{poly}
\newcommand{\hide}[1]{}
\DeclareMathOperator{\vol}{vol}
\DeclareMathOperator{\out}{out}
\begin{document}


\title{\bf Faster Deterministic Distributed Coloring\\
Through Recursive List Coloring}


\medskip
\author{
Fabian Kuhn \\
University of Freiburg, Germany \\
kuhn@cs.uni-freiburg.de}

\date{}
\maketitle

\begin{abstract}
  We provide novel deterministic distributed vertex coloring
  algorithms. As our main result, we give a deterministic distributed
  algorithm to compute a $(\Delta+1)$-coloring of an $n$-node graph
  with maximum degree $\Delta$ in
  $2^{O(\sqrt{\log \Delta})}\cdot\log n$ rounds. This improves on the
  best previously known time complexity of
  $\min\set{O\big(\sqrt{\Delta\log\Delta}\log^*\Delta+\log^* n\big), 2^{O(\sqrt{\log
        n})}}$
  for a large range of values of $\Delta$. For graphs with arboricity
  $a$, we obtain a deterministic distributed algorithm to compute a
  $(2+o(1))a$-coloring in time $2^{O(\sqrt{\log a})}\cdot\log^2 n$.
  Further, for graphs with bounded neighborhood independence, we show
  that a $(\Delta+1)$-coloring can be computed more efficiently in time
  $2^{O(\sqrt{\log\Delta})} + O(\log^* n)$. This in particular implies
  that also a $(2\Delta-1)$-edge coloring can be computed
  deterministically in $2^{O(\sqrt{\log \Delta})} + O(\log^* n)$
  rounds, which improves the best known time bound for small values of
  $\Delta$.  All results even hold for the list coloring variants of
  the problems. As a consequence, we also obtain an improved
  deterministic $2^{O(\sqrt{\log\Delta})}\cdot\log^3 n$-round
  algorithm for $\Delta$-coloring non-complete graphs with maximum
  degree $\Delta\geq 3$. Most of our algorithms only require messages
  of $O(\log n)$ bits (including the $(\Delta+1)$-vertex coloring
  algorithms).

  Our main technical contribution is a recursive deterministic
  distributed list coloring algorithm to solve list coloring
  problems with lists of size $\Delta^{1+o(1)}$. Given some list coloring problem and an orientation of the
  edges, we show how to recursively divide the global color space
  into smaller subspaces, assign one of the subspaces to each node of
  the graph, and compute a new edge orientation such that for each
  node, the list size to out-degree ratio degrades at most by a
  constant factor on each recursion level.
\end{abstract}

\section{Introduction and Related Work}
\label{sec:intro}

Distributed graph coloring is one of the core problems and probably
also the most extensively studied problem in the area of
\emph{distributed graph algorithms}. It is one of the classic problems
to study the more general fundamental question of how to efficiently
break symmetries in a network in a distributed way. The network is
modeled as a $n$-node graph $G=(V,E)$ in which the nodes communicate over
the edges and the objective is to compute a proper
coloring of $G$. If the maximum degree of $G$ is $\Delta$, the goal
typically is to compute a coloring with $f(\Delta)$ colors for some
function $f$, where the standard task is to color $G$ with $\Delta+1$
colors.\footnote{$\Delta+1$ colors is what can be guaranteed by a
  simple sequential greedy algorithm. Consequently and maybe even more
  importantly in the context of distributed algorithms, it is the
  minimum number of colors necessary such that an arbitrary partial
  proper coloring can be extend to a proper color of all nodes of
  $G$.} The problem is usually studied in a synchronous message
passing model known as the \LOCAL model \cite{linial92,peleg00}. Each
node $v\in V$ is equipped with a unique $O(\log n)$-bit identifier
$\ID(v)$. The nodes operate in synchronous rounds, where in each round
each node can send a message to each of its neighbors.  Initially, the
nodes only know the IDs of their neighbors and at the end, each node
needs to know its own color of the computed vertex coloring of
$G$. The time complexity of an algorithm is measured as the number of
rounds. In the \LOCAL model, the maximum message size is not
bounded. The more realistic variant of the model where messages have
to be of size at most $O(\log n)$ bits is known as the \CONGEST model
\cite{peleg00}.

\paragraph{Early Work on Distributed Coloring:} The work on
distributed coloring started more than 30 years ago with the early
work on solving symmetry breaking problems in the parallel setting
\cite{alon86,cole86,goldberg88,luby86} and most importantly with a
seminal paper by Linial \cite{linial87}, which introduced the \LOCAL
model for solving graph problems in a distributed setting. Linial in
particular showed that a graph can be colored (deterministically) in
$O(\log^* n)$ rounds with $O(\Delta^2)$ colors and that this is tight
even for graph of degree $2$: Every
deterministic algorithm to color an $n$-node ring with a
constant number of colors requires $\Omega(\log^* n)$
rounds.\footnote{The $\log^* n$ function measures the number of times
  the $\log(\cdot)$ function has to applied iteratively in order to obtain a
  constant value when starting with value $n$.} Naor later extended the lower
bound to randomized algorithms~\cite{naor91}. To this day, the
$\Omega(\log^* n)$ lower bound is still the only known time lower
bound for $(\Delta+1)$-coloring in the \LOCAL model. Together with the
observation that as long as there are more than $\Delta+1$
colors, one can reduce the number of colors by $1$ in a single round
(by recoloring the nodes of the largest color), the $O(\Delta^2)$-coloring
algorithm of Linial directly implies a deterministic
$O(\Delta^2 + \log^* n)$-time algorithm for $(\Delta+1)$-coloring. It
was later shown that a refined single-round color reduction
scheme improves this running time to
$O(\Delta\log\Delta + \log^* n)$~\cite{szegedy93,Kuhn2006On}. 

If randomization is allowed, already the early works imply that a
$(\Delta+1)$-coloring can always be computed in $O(\log n)$
rounds. The result in particular follows from a simple reduction from
$(\Delta+1)$-coloring to the problem of computing a maximal
independent set (MIS) \cite{luby86,linial87} and from the classic
$O(\log n)$-time randomized algorithms for computing an MIS
\cite{alon86,luby86}.

\paragraph{Exponential Separation Between Randomized and Deterministic
  Complexity:} While there are very efficient (and simple) randomized
algorithms for solving $(\Delta+1)$-coloring, the best known
deterministic algorithm is almost exponentially slower an requires
time $2^{O(\sqrt{\log n})}$~\cite{panconesi95}. The algorithm is based
on a structure known as a network decomposition, which was introduced
by Awerbuch et al.~\cite{awerbuch89} and decomposes the graph into
clusters of small diameter and a coloring of the cluster graph with a
small number of colors. The distributed coloring problem is then
solved in a brute-force way: One iterates over the cluster colors and
for each cluster color, the respective clusters are colored by locally
learning the complete cluster topology in time proportional to the cluster
diameter. The question of whether there really is an exponential gap
between the best randomized and deterministic time complexities for
the $(\Delta+1)$-coloring problem (and in fact for many more problems)
is one of the most fundamental and long-standing open problems in the
area of distributed graph algorithms (see, e.g.,
\cite{linial87,barenboimelkin_book,stoc17_complexity}). In fact, in
their book on distributed graph coloring, Barenboim and
Elkin~\cite{barenboimelkin_book} explicitly state that \emph{``perhaps
  the most fundamental open problem in this field is to understand the
  power and limitations of randomization.''}. In the present paper, we
make progress on establishing the deterministic complexity of the
distributed $(\Delta+1)$-coloring problem. While we do not improve the
best time bound for general graphs, we significantly improve the best known
bound for a large range of values for $\Delta$. We also give the first
$2^{O(\sqrt{\log n})}$-time deterministic $(\Delta+1)$-coloring
algorithm that does not depend on computing a network
decomposition. In fact, our algorithm even works in the \CONGEST
model.

\paragraph{Deterministic Distributed Coloring:} As we focus on
deterministic distributed coloring algorithms, we first
discuss this case in more detail. While as a function of $n$, the
$2^{O(\sqrt{\log n})}$-time solution of \cite{panconesi95} is still
the fastest known deterministic algorithm, there has been extensive
work on determining the round complexity of distributed coloring as a
function of the maximum degree $\Delta$.  The first improvement over
the $O(\Delta\log\Delta + \log^*n)$-algorithm mentioned above was
obtained independently by Barenboim and Elkin \cite{barenboim09} and
Kuhn \cite{spaa09}, who showed that the problem can be solved in
time $O(\Delta + \log^* n)$. The algorithms of
\cite{barenboim09,spaa09} introduced \emph{distributed defective
  coloring} as a new technique. A $d$-defective $c$-coloring of a
graph is a coloring with $c$ colors such that the subgraph induced by
each color class has maximum degree at most $d$. Such a coloring
allows to solve the distributed coloring problem in a
divide-and-conquer fashion by decomposing the graph into several
smaller-degree subgraphs. This basic idea was further developed by
Barenboim and Elkin in \cite{barenboim10}, where they showed that for
every constant $\eps>0$, it is possible to compute an
$O(\Delta)$-coloring in time $O(\Delta^\eps\log n)$ and that a
$\Delta^{1+\eps}$-coloring can be computed in time
$O(\poly\log\Delta\cdot \log n)$. The paper introduced \emph{arbdefective
colorings} as a generalization of defective colorings. Such colorings
partition the nodes of a graph into subgraphs of small
arboricity\footnote{The arboricity of a graph $G$ is the minimum
  number of forests into which the edge set of $G$ can be decomposed.}
rather into subgraphs of small degree. Note that while the paper shows
that there are very efficient coloring algorithms that use a
relatively small number of colors, it does not improve the time
required to compute a $(\Delta+1)$-coloring. A further improvement for
the
$(\Delta+1)$-coloring problem was obtained by Barenboim
\cite{barenboim15}, who improves the time from
$O(\Delta+\log^* n)$ to $O(\Delta^{3/4}\log\Delta + \log^* n)$. The
key insight of the paper was to consider the more general list
coloring problem rather than the ordinary vertex coloring problem. In
a $C$-list coloring problem, every node $v\in V$ has to choose a color
from a list $L_v$ of size $|L_v|\geq C$, which is given to $v$
initially. List colorings in particular allow to extend previously
computed partial colorings.  Considering list colorings thus allows to
recursively decompose the graph and to then color the parts
sequentially without having to divide the color space among the
different parts. The result of \cite{barenboim15} was improved to
$O(\sqrt{\Delta}\cdot \poly\log\Delta + \log^* n)$ by Fraigniaud,
Heinrich, and Kosowski~\cite{fraigniaud16}, who introduced a
generalized version of list coloring, which they could solve more
efficiently. Our algorithms build on the techniques developed in all
the above works. From a technical point of view, our main contribution
is an extension of the recursive algorithm of \cite{barenboim10} to
list colorings. We then combine this list coloring algorithm
with the framework developed in \cite{barenboim15,fraigniaud16}.

We conclude the discussion on existing deterministic distributed
coloring by mentioning a few additional notable results. While except
for the original $\Omega(\log^* n)$ bound, no lower bound for the
\LOCAL model has been proven, it was recently shown that in a weak
version of the \LOCAL model (called the \SETLOCAL model), the
$(\Delta+1)$-coloring problem has a lower bound of
$\Omega(\Delta^{1/3})$ \cite{disc16_coloring}. Further, in
\cite{barenboim18}, an alternative and particularly elegant
deterministic $O(\Delta+\log^* n)$-time $(\Delta+1)$-coloring
algorithm was given. The algorithm reduces the colors iteratively in a
round-by-round manner rather than by using defective colorings. The
technique also allows to slightly improve the best
$(\Delta+1)$-coloring bound of \cite{fraigniaud16} to
$O(\sqrt{\Delta\log\Delta}\log^*\Delta + \log^* n)$~\cite{barenboim18}.  Finally,
some authors also considered the problem of coloring the nodes of a
graph with $\Delta$ instead of $\Delta+1$ colors. Although this
problem has a less local nature, it has been shown that up to
$\poly\log n$ factors, $\Delta$-colorings can essentially be computed
as fast as a
$(\Delta+1)$-colorings~\cite{panconesi95delta,podc18_Deltacoloring}.

\paragraph{Randomized Distributed Coloring:} There has also been a lot
of work on understanding the randomized complexity of the distributed
coloring problem
\cite{kothapalli06,SchneiderW10,barenboim12,PettieS13,HarrisSS16,podc18_Deltacoloring,chang18_coloring}. A
particularly important contribution was provided by Barenboim, Elkin,
Pettie, and Schneider \cite{barenboim12}, who introduced the so-called
graph shattering technique to the theory of distributed graph
algorithms. The paper shows that a $(\Delta+1)$-coloring can be
computed in time $O(\log\Delta) + 2^{O(\sqrt{\log\log n})}$. The basic
idea of shattering is to first use an efficient randomized algorithm
that solves the problem on most of the graph, has a time complexity
that often just depends on $\Delta$, and that
essentially only leaves unsolved components of size $\poly\log n$. In
the case of \cite{barenboim12}, this randomized part of the algorithm
requires $O(\log\Delta)$ rounds. The remaining problems on
$\poly\log n$-size components are then solved by using the most
efficient deterministic algorithm for the problem, i.e., in time
$2^{O(\sqrt{\log\log n})}$ in the case of distributed coloring. In
fact, in \cite{chang16}, Chang, Kopelowitz, and Pettie showed that
this shattering technique is essentially necessary and that the
randomized complexity of coloring (and many more graph problems) on
graphs of size $n$ is at least the deterministic complexity of the problem on
graphs of size $\sqrt{\log n}$. Note that this adds another very
strong motivation to understand the deterministic complexity of the
distributed coloring problem. In subsequent work, the time bound of
the randomized $(\Delta+1)$-coloring problem was improved to
$O(\sqrt{\log\Delta}) + 2^{O(\sqrt{\log\log n})}$ in \cite{HarrisSS16}
and finally to $2^{O(\sqrt{\log\log n})}$ in
\cite{chang18_coloring}. Note that in \cite{chang18_coloring}, the
time is dominated by the deterministic time for coloring components of
size $\poly\log n$. Hence, the current best randomized time complexity
of the $(\Delta+1)$-coloring problem can only be improved by at the
same time also improving the best deterministic algorithm for the
problem.

\paragraph{Distributed Edge Coloring:}
A particularly extensively studied special case of distributed
coloring is the edge coloring problem. The task corresponding to
computing a $(\Delta+1)$-vertex coloring is to compute an edge
coloring with $2\Delta-1$ colors.  It was already realized 20 years
ago that the distributed edge coloring problem has a simpler structure
than the distributed vertex coloring problem and there have been many
contributions towards the randomized
\cite{panconesi1997randomized,dubhashi1998near,ElkinPS15,ChangHLPU18}
and the deterministic
\cite{PanconesiR01,Czygrinow01,ghaffari17,FischerGK17,derandomization,stoc18_edgecoloring,Harris18,SuVu19}
complexity of the problem. In both case, we now know efficient
algorithms for computing $(2\Delta-1)$-edge colorings and (under some
conditions in the deterministic case) even for
computing $(1+\eps)\Delta$-edge colorings. Note that by Vizing's
classic theorem, every graph has an edge coloring with
$\Delta+1$ colors \cite{vizing1964estimate}.

A particularly important step was obtained by Fischer et
al.~\cite{FischerGK17}, who show that the $(2\Delta-1)$-edge coloring
problem can be solved deterministically in
$O(\poly\log\Delta \cdot \log n)$ rounds. Previously, the best
deterministic algorithm for general graphs had a time complexity of
$2^{O(\sqrt{\log n})}$ and as in the case of the best $(\Delta+1)$-vertex coloring
algorithm, it was a brute-force algorithm based on first computing a
network decomposition \cite{awerbuch89,panconesi95}. Together with the
best known randomized algorithms (which use the shattering technique),
the algorithm of \cite{FischerGK17} also implied that the
$(2\Delta-1)$-edge coloring problem has a randomized complexity of
$\poly\log\log n$. The current best time complexities for computing an
$(2\Delta-1)$-edge coloring in general graphs are
$\tilde{O}(\log^2\Delta\log n)$ for the deterministic and
$\tilde{O}(\log^3\log n)$ for the randomized setting
\cite{Harris18}. For small values of $\Delta$, the best known
complexity is
$O(\sqrt{\Delta\log\Delta}\log^*\Delta +
\log^*n)$~\cite{fraigniaud16,barenboim18},
as for computing a $(\Delta+1)$-vertex coloring. As one of our
contributions, we improve this last result to
$2^{O(\sqrt{\log\Delta})}+O(\log^* n)$.

\subsection{Contributions and Main Ideas}
\label{sec:contributions}

We will now describe our contributions and the main ideas that
lead to those contributions. 

\paragraph{Recursive Distributed List Coloring:} As stated above, from
a technical point of view, our main contribution is a recursive solution
for the list coloring problem. Our algorithm generalizes the algorithm
of Barenboim and Elkin in \cite{barenboim10}. In
\cite{barenboim10}, the node set of $G$ is recursively divided into
smaller sets and the subgraphs of $G$ induced by 
these node sets are then colored independently by using disjoint sets of colors for
the different parts. The recursive partitioning is done by a so-called
arbdefective coloring. Given a graph $G$ with an edge orientation
where the out-degree of each node is at most $\beta$, a
$\beta'$-arbdefective coloring of $G$ is a coloring and a new
orientation of $G$ such that each node has at most $\beta'$
out-neighbors (w.r.t.\ to the new orientation) of its own color. In
\cite{barenboim10}, it is shown that in time $O(k^2\log n)$, it is
possible to compute an $O(\beta/k)$-arbdefective coloring with $k$
colors. Note that this recursive partitioning of the graphs requires
the total number of colors to be sufficiently large and it in
particular cannot provide a $(\Delta+1)$-coloring. In each recursive
step, the number of parts grows by a constant factor more than the maximum
out-degree is reduced.

When trying to extend the above solution to list coloring, one could
try to divide the space of all possible colors into $k$ parts and
assign each node to one of the parts. If we could find such a
partition such that the color list $L_v$ of each node is roughly
divided by a factor $k$, we could use the same arbdefective colorings as
in \cite{barenboim10} to recursively divide problem. Finding such a
partition of the color space is however a problem that does not have a
local solution (whether such a partition is good depends on all the
lists). We therefore adapt the solution of \cite{barenboim10} in a
more careful way that is specifically designed to work for the list
coloring problem. We consider an arbitrary partition of the global
color space into $k$ (roughly) equal parts and we partition the nodes
by assigning each node to one of the $k$ parts. However, instead of
equally reducing the out-degrees, we find a partition and a new
orientation such that for every node $v$, the size of its list gets
smaller at roughly the same rate as the out-degree of $v$ (again up to
a constant factor). For a formal definition of our generalization of
arbdefective colorings, we refer to \Cref{def:orientedreduction}. If
the initial lists are sufficiently large, this recursive partitioning
allows to maintain list coloring problems that are solvable. We
formally prove the following main technical result. The construction
follows the same basic ideas as the construction in
\cite{barenboim10}, carefully adapted to the more general list
coloring setting.

\begin{theorem}
  \label{thm:mainrecursive}
  Let $G=(V,E)$ be a graph where all edges are oriented and for each
  $v\in V$, let $\beta(v)$ be the out-degree of $v$. Let
  $\eps\in(0,1]$ and $r\geq 1$ be two parameters and assume that each
  node $v$ is assigned a list $L_v$ of size at least
  $|L_v|>(2+\eps)^r\beta(v)$ consisting of colors from a color space
  of size at most $C$. Then, there is a deterministic
  $O\big(\frac{r\cdot C^{2/r}}{\eps^3}\cdot\log n\big)$-round
  algorithm to solve the given list coloring instance. The algorithm
  requires messages of size $O(\log n + \log C)$.
\end{theorem}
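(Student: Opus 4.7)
The plan is to apply a one-step color-space splitting procedure $r$ times, driven by a core list-aware reduction lemma that mirrors the arbdefective coloring construction of \cite{barenboim10} but is carefully adapted so that each node's list shrinks in proportion to its out-degree rather than only the out-degree shrinking. Set $k := \lceil C^{1/r}\rceil$ and fix once-for-all a partition of the color space $[C]$ into $k$ parts $P_1, \ldots, P_k$ each of size at most $\lceil C/k \rceil$. The induction invariant after $\ell$ levels is that every surviving subinstance uses a color space of size at most $C/k^\ell$, has an orientation with out-degrees $\beta(v)$, and satisfies $|L_v| > (2+\eps)^{r-\ell}\beta(v)$ for every $v$. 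At $\ell = r$ the color space is a singleton, and $|L_v|>\beta(v)\geq 0$ forces $|L_v|=1$ and $\beta(v)=0$, so $v$ simply outputs its unique remaining color.

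The driving lemma I would state is: given an oriented instance with $|L_v| > (2+\eps)\beta(v)$ and a fixed partition $P_1,\ldots,P_k$ of its color space, one can compute in $O(k^2/\eps^3 \cdot \log n)$ deterministic rounds an assignment $\pi : V \to [k]$ and a new orientation such that for every $v$, $|L_v \cap P_{\pi(v)}| > \beta_\pi'(v)$, where $\beta_\pi'(v) := |\{u : v\to u,\ \pi(u)=\pi(v)\}|$. This is the list-aware generalization of the $(\beta/k)$-arbdefective coloring with $k$ colors from \cite{barenboim10}, as formalized by the paper's notion of an oriented color reduction (\Cref{def:orientedreduction}). Iterating the lemma $r$ times and solving the $k^\ell$ color-disjoint subinstances that arise at level $\ell$ in parallel yields the theorem: the total round cost is $r \cdot O(k^2/\eps^3 \cdot \log n) = O\!\left(r \cdot C^{2/r}/\eps^3 \cdot \log n\right)$, and messages need only a part index in $[k]$, a candidate color in $[C]$, and $O(1)$ scheduling bits, i.e.\ $O(\log n + \log C)$ bits.

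To prove the lemma I would follow the iterative construction of \cite{barenboim10}, adapted in two places. First, since the parts $P_1,\ldots,P_k$ are fixed, a simple averaging argument gives that every node $v$ has some part $i^*$ with $|L_v\cap P_{i^*}| \geq |L_v|/k$, so from the list side alone every node already has an acceptable choice; the work is to simultaneously reorient edges so that the new out-degree $\beta_\pi'(v)$ is at most roughly $\beta(v)/k$ up to the $(2+\eps)$ slack. Second, I would schedule the nodes in $O(k^2)$ phases obtained by composing a Linial-style $O(\Delta^2)$-initial coloring with the part index, and cycle through the phases $O(\log n / \eps^2)$ times; in each phase, the still-uncommitted active nodes commit irrevocably to a part greedily, maximizing the ratio $|L_v\cap P_i|$ over the number of already-committed same-part out-neighbors, and reorient all not-yet-decided same-part incident edges away from themselves. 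A potential function tracking the weighted sum of excess out-degrees then decreases by a constant factor per cycle, yielding termination after $O(\log n)$ cycles.

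The main obstacle I anticipate is precisely the coupling between list reduction and out-degree reduction: unlike in \cite{barenboim10}, where only one scalar (degree) had to be balanced, each node's choice of $\pi(v)$ simultaneously affects its own surviving list and every neighbor's surviving out-degree, so two endpoints of an edge committing in parallel to the same part can break the invariant for both. Absorbing these collisions by spending a $1/(2+\eps)$-factor of slack on each axis, combined with the Linial scheduling overhead and the averaging loss from the fixed partition, is where the $\eps^{-3}$ factor in the round bound arises; making this bookkeeping precise, rather than the recursion itself, is the main content of the proof.
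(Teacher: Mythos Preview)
Your outer recursion is exactly the paper's: set $\eta=\lceil C^{1/r}\rceil$, apply an oriented $(\eta,2+\eps)$-list color space reduction $r$ times, and observe that at depth $r$ the color space is a singleton and the ratio invariant forces $\beta(v)=0$. One small slip: your driving lemma as stated (premise $|L_v|>(2+\eps)\beta(v)$, conclusion $|L_v'|>\beta'(v)$) does not match your own induction invariant; you need the multiplicative conclusion $|L_v'|/\beta'(v)\geq |L_v|/\big((2+\eps)\beta(v)\big)$, which is precisely \Cref{def:orientedreduction} that you cite. This is a statement issue, not a conceptual one.

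The substantive gap is in your proof sketch of the driving lemma. The paper does \emph{not} use a Linial proper coloring plus cycling with a potential function. A Linial coloring has $O(\Delta^2)$ colors, not $O(k^2)$, so your phase count is off by a $\poly(\Delta)$ factor; and ``commit greedily, then reorient undecided same-part edges away from yourself, and track a weighted sum of excess out-degrees'' is not an argument---nothing in your sketch bounds the final out-degree $\beta'(v)$, because later commitments by neighbors can push it up after $v$ has committed. The paper's actual mechanism is a \emph{generalized $H$-partition} (\Cref{def:Hpartition}, \Cref{lemma:Hpartition}): peel the graph into $h=O(\log n/\eps)$ layers so that every $v$ has at most $(2+\eps/2)\beta(v)$ neighbors in its own or higher layers. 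This is the step that converts the given orientation into a new near-acyclic one with out-degree $\leq(2+\eps/2)\beta(v)$, and it is where the $\log n$ factor comes from. Within each layer one then computes an $(\eps/(2p))$-relative defective $O(p^2/\eps^2)$-coloring (\Cref{thm:basicdefective}), which gives $O(k^2/\eps^2)$ scheduling classes per layer---this, not Linial, is the source of the $k^2$ in the round count. Processing layers top-down and colors within a layer in order, each node $v$ sees all of its new out-neighbors already committed (up to the small defect absorbed by the $\eps/2$ slack) and picks $i_v$ by the averaging argument $\sum_i b_i\leq(2+\eps/2)\beta(v)=(2+\eps/2)\sum_i\ell_i/\alpha_v$. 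Your sketch is missing this layering step, and without it the greedy commitment has no reason to terminate with the claimed guarantee.
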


Note that \Cref{thm:mainrecursive} directly gives a generalization of
the results of \cite{barenboim10} to list colorings. Specifically, for
an arbitrary initial edge orientation, the theorem implies that if $C$
is a most polynomial in $\Delta$, for every constant $\eps>0$, there
are deterministic distributed coloring algorithms to solve list
coloring instances with lists of size $c\Delta$ for a sufficiently
large constant $c>1$ in
$O(\Delta^\eps\log n)$ rounds and list coloring instances with lists
of size $\Delta^{1+\eps}$ in $O(\poly\log\Delta\cdot\log n)$ rounds. The
two results are obtained by setting $r=\Theta(1/\eps)$ and
$r=\Theta(\eps\log\Delta)$, respectively.

\paragraph{\boldmath$(\Delta+1)$-List Coloring:} By using the
framework developed in \cite{barenboim15,fraigniaud16}, we can use the
list coloring algorithm of \Cref{thm:mainrecursive} to efficiently
compute list colorings for lists of size $\Delta+1$. The basic idea is
to first use a defective coloring to partition the nodes of $G$ into
$\Delta^{O(\eps)}$ parts of maximum degree at most
$\Delta^{1-\eps}$. The algorithm then sequentially iterates through
the $\Delta^{O(\eps)}$ parts and when processing part $i$, it colors
the nodes of that part so that the coloring is consistent with the
coloring computed for the previous parts. Note that when coloring a
part in a consistent way with already colored parts, each node $v$ is
only allowed to choose a color that has not already been chosen by a
neighbor of $v$. Coloring each part is therefore a list coloring
problem, even if the original problem is not a list coloring
problem. As long as at least half the neighbors of a node $v$ are
uncolored when processing node $v$, the list of $v$ is of length more
than $\Delta/2$. Because each part has maximum degree at most
$\Delta^{1-\eps}$, we are thus in a setting where we can apply
\Cref{thm:mainrecursive}. After going through the $\Delta^{O(\eps)}$
parts, we get a partial list coloring where the maximum degree of the
graph induced by the uncolored nodes is at most $\Delta/2$. Iterating
$O(\log\Delta)$ times then gives a $(\Delta+1)$-coloring.  By setting
$\eps=\Theta(1/\sqrt{\log\Delta})$, we obtain our main theorem, which
is stated in the following.

\begin{theorem}\label{thm:generalDelta}
  Let $G=(V,E)$ be an $n$-node graph with maximum degree $\Delta$ and
  assume that we are given a $(\deg(v)+1)$-list coloring instance with
  colors from a color space of size $\poly(\Delta)$. Then, there is a
  deterministic \CONGEST model algorithm to solve this list coloring
  problem in time $2^{O(\sqrt{\log \Delta})} \log n$.
\end{theorem}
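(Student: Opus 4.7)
The plan is to combine \Cref{thm:mainrecursive} with the defective-coloring/iterative-halving framework sketched in the paragraph just before the theorem statement. I would run $\Theta(\log\Delta)$ outer \emph{phases}, each of which reduces the maximum degree in the subgraph induced by the still-uncolored nodes by a factor of two; after $O(\log\Delta)$ phases the uncolored subgraph is empty and the partial list coloring has become a complete one. Throughout, I would maintain the invariant that the residual list of every uncolored node $v$ has size at least $\deg_H(v)+1$, where $H$ is the current uncolored subgraph; this is automatic because removing a colored neighbor from a list also removes one from the uncolored-degree count.

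Inside one phase, the first step is to compute a defective coloring of the current $H$ with $k=\Delta^{O(\eps)}$ colors such that each color class $V_i$ induces a subgraph of maximum degree at most $\Delta^{1-\eps}$; such colorings are obtainable in $O(\log^* n)$ or at worst $\poly\log(\Delta)\cdot\log n$ rounds, which is dominated by the list-coloring cost below. I would then process $V_1,\ldots,V_k$ sequentially. For each part $V_i$, I classify $v\in V_i$ as \emph{active} if at least half of its $H$-neighbors are still uncolored, and as \emph{inactive} otherwise. Active nodes will be colored in the current phase; inactive ones are deferred, and since at least half of their neighbors are already colored, their residual uncolored degree is at most $\deg_H(v)/2$, which is where the factor-two degree reduction comes from. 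For an active node, at most $\deg_H(v)/2$ of its original colors have been eliminated, so the residual list has size at least $\deg_H(v)/2+1$.

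Restricted to the active nodes of $V_i$, the induced graph has maximum degree at most $\Delta^{1-\eps}$, so I can orient its edges arbitrarily to obtain out-degree $\beta(v)\leq\Delta^{1-\eps}$ and invoke \Cref{thm:mainrecursive}. The hypothesis $|L_v|>(2+\eps)^r\beta(v)$ becomes $(2+\eps)^r<\Delta^{\eps}/2$ (using $|L_v|\geq\Delta/2$ up to the low-degree case handled trivially), which is satisfied by $r=\Theta(\eps\log\Delta)$. The per-call cost is $O(r\cdot C^{2/r}\cdot\eps^{-3}\cdot\log n)$; with $C=\poly(\Delta)$ the product $r\cdot C^{2/r}$ is minimized at $r=\Theta(\sqrt{\log\Delta})$, forcing $\eps=\Theta(1/\sqrt{\log\Delta})$ and giving $r\cdot C^{2/r}=2^{O(\sqrt{\log\Delta})}$. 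The number of calls per phase is $k=\Delta^{O(\eps)}=2^{O(\sqrt{\log\Delta})}$, the number of phases is $O(\log\Delta)$, and multiplying everything yields the claimed $2^{O(\sqrt{\log\Delta})}\log n$ bound. \CONGEST is inherited from \Cref{thm:mainrecursive}, which already uses messages of $O(\log n+\log C)=O(\log n)$ bits.

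The main delicate point I expect is not any single ingredient but the simultaneous balancing of three budgets: the number of parts $\Delta^{O(\eps)}$, the recursion depth $r$ required so that $(2+\eps)^r\Delta^{1-\eps}$ stays below the residual list size, and the per-call factor $C^{2/r}$. All three must land at $2^{O(\sqrt{\log\Delta})}$, which essentially pins down $\eps\asymp 1/\sqrt{\log\Delta}$ and $r\asymp\sqrt{\log\Delta}$. A secondary subtlety is to verify, carefully and uniformly, that the ``half-uncolored'' active/inactive threshold simultaneously (i) guarantees a list of size strictly above $(2+\eps)^r\Delta^{1-\eps}$ for every node being colored at the moment its part $V_i$ is processed, even though earlier parts have already shrunk the lists, and (ii) guarantees that every deferred inactive node genuinely has its uncolored degree at most halved, so that the outer induction on $\Delta$ really closes and the residual instance is again a valid $(\deg+1)$-list-coloring problem over the same color space.
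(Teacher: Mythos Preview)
Your proposal is correct and follows essentially the same route as the paper: the paper packages the outer $O(\log\Delta)$ halving phases and the inner sequential pass over defective-color classes into \Cref{lemma:iterativedegree}, and then instantiates the required partial-coloring primitive $\calA$ by \Cref{thm:mainrecursivefull} with $r=\Theta(\sqrt{\log\Delta})$, exactly the balance you arrive at.

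One small point is worth flagging. You take the active/inactive cutoff at $\deg_H(v)/2$ and use a defective coloring with absolute defect $\Delta^{1-\eps}$; this is what creates the ``low-degree case'' you then have to wave away, since an active node with $\deg_H(v)\ll\Delta$ can have list size $\approx\deg_H(v)/2$ but in-part degree still as large as $\Delta^{1-\eps}$, so the ratio hypothesis of \Cref{thm:mainrecursive} fails. The paper sidesteps this cleanly in two equivalent ways: it uses a \emph{relative} defective coloring (\Cref{thm:basicdefective}), so the in-part degree is at most a $1/(2S(G))$-fraction of $\deg(v)$ rather than of $\Delta$, and it takes the cutoff at the absolute value $\Delta/2$ (with $\Delta$ the current maximum degree of $H$). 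Either change makes the inequality $|L_v'|>S(G)\cdot\deg_{G_x}(v)$ hold uniformly for every node that must be colored, and no separate low-degree argument is needed.
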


Note that by $(\deg(v)+1)$-list coloring we refer to a list coloring
instance where the size of the list of each node $v$ is at least
$\deg(v)+1$. The theorem thus in particular implies that the standard
$(\Delta+1)$-coloring problem can be solved in time
$2^{O(\sqrt{\log\Delta})}\log n$ in the \CONGEST
model. \Cref{thm:generalDelta} improves the best known deterministic
time complexity of
$\min\set{\tilde{O}(\sqrt{\Delta}) + O(\log^*n), 2^{O(\sqrt{\log
      n})}}$
for solving $(\Delta+1)$-coloring even in the \LOCAL model for all
$\Delta$ between $\log^{2+\eps}n$ and $n^{o(1)}$. The theorem also
extends the range of $\Delta$ for with $(\Delta+1)$-coloring can be
solved in deterministic $\poly\log n$ time from
$\Delta\leq\poly\log n = 2^{O(\log\log n)}$ to
$\Delta = 2^{O(\log^2\log n)}$. It is further notable that our
algorithm gives an alternative way of solving $(\Delta+1)$-coloring in
deterministic $2^{O(\sqrt{\log n})}$ time. Our algorithm does not
depend on computing a network decomposition and solving the clusters
in a brute-force way and the algorithm even works in the \CONGEST
model (i.e., with $O(\log n)$-bit messages).

\paragraph{Bounded Neighborhood Independence:} A graph $G=(V,E)$ is
said to have neigborhood independence $\theta$ if every node $v\in V$
has at most $\theta$ independent (i.e., pairwise non-adjacent)
neighbors (see also \Cref{sec:prelim}). In \cite{barenboim11},
Barenboim and Elkin showed that in graphs of bounded neighborhood
independence, the recursive coloring algorithm of \cite{barenboim10}
can be improved: Essentially, they show that the multiplicative
$O(\log n)$ factor in all the time bounds of \cite{barenboim10} can be
replaced by an additive $O(\log^* n)$ term if the neighborhood
independence is $\theta=O(1)$. We show that the same improvement can
also be obtained for list colorings, leading to the following theorem.

\begin{theorem}\label{thm:BNI}
  Let $G=(V,E)$ be an $n$-node graph with maximum degree $\Delta$ and
  neighborhood independence at most $\theta\geq 1$. Assume that we are
  given a $(\deg(v)+1)$-list coloring instance with colors from a
  color space of size $\poly(\Delta)$. Then, there is a deterministic
  \CONGEST model algorithm to solve this list coloring problem in time
  $2^{O(\sqrt{\log\theta\cdot \log \Delta})} +O(\log^* n)$.
\end{theorem}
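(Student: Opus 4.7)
The plan is to reproduce the framework used to prove \Cref{thm:generalDelta}, but to replace the multiplicative $\log n$ factor coming from \Cref{thm:mainrecursive} by an additive $O(\log^*n)$ term, exploiting the $\theta$-bounded neighborhood independence. Concretely, the outer scheme of \cite{barenboim15,fraigniaud16} still applies: first compute a defective coloring that partitions $V$ into $\Delta^{O(\eps)}$ parts of maximum degree at most $\Delta^{1-\eps}$, then process the parts sequentially (each becomes a list coloring on a subgraph of max degree $\Delta^{1-\eps}$ with lists of size at least $\Delta/2$), and finally iterate $O(\log\Delta)$ times to halve the residual maximum degree. To reach the claimed bound, I would set $\eps=\Theta\bigl(\sqrt{\log\theta/\log\Delta}\bigr)$ and $r=\Theta(\eps\log\Delta)$, and plug in a BNI-accelerated version of \Cref{thm:mainrecursive}.

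The main technical step is therefore a BNI variant of \Cref{thm:mainrecursive} whose round complexity is $\poly(\theta)\cdot 2^{O(1/\eps)}\cdot r/\eps^{O(1)}+O(\log^*n)$ rather than $O(r\,C^{2/r}\log n/\eps^3)$. Inspecting the proof of \Cref{thm:mainrecursive}, each of the $r$ recursion levels computes a single oriented list-reduction in the sense of \Cref{def:orientedreduction}; the $\log n$ overhead per level comes from the underlying arbdefective-style coloring subroutine. Following the philosophy of Barenboim and Elkin \cite{barenboim11}, I would show that in a graph of neighborhood independence $\theta$ this reduction can be computed in $\poly(\theta,\Delta^{2/r},1/\eps)$ rounds once a $\poly(\theta,\Delta)$-coloring of $G$ is available. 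That base coloring is produced once, at the very start of the algorithm, in $O(\log^*n)$ rounds by a Linial-type procedure applied to an auxiliary conflict graph whose effective degree is bounded by $\poly(\theta,\Delta)$. Since the base coloring needs to be computed only once and all subsequent work is local in $\theta$ and $\Delta$, the $O(\log^*n)$ term is paid additively.

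Given this BNI-accelerated recursive list coloring primitive, plugging it into the outer framework and expanding the costs with the parameters above yields the claimed bound: the factors $\Delta^{O(\eps)}$ from the defective partition, $2^{O(1/\eps)}$ from each recursion level, $\poly(\theta)$ from the BNI overhead, and the $O(\log\Delta)$ outer iterations all combine (under $\theta\le\Delta$) to $2^{O(\sqrt{\log\theta\log\Delta})}$, while the $O(\log^*n)$ stays additive because it is incurred only when the initial Linial coloring is set up.

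The main obstacle is the BNI-accelerated oriented list-reduction itself. Compared to the plain arbdefective coloring treated in \cite{barenboim11}, the reduction needed for \Cref{thm:mainrecursive} has two additional demands: the global color space must be split into subspaces in a way that is \emph{compatible with each node's list}, so that $|L_v|$ shrinks by the right factor, and the edges must be re-oriented so that the same factor controls the out-degree. Carrying out both steps in time $\poly(\theta,\Delta^{2/r},1/\eps)$ after the one-time $O(\log^*n)$ base coloring requires fusing the BNI color-reduction machinery of \cite{barenboim11} with the list/out-degree balancing argument underlying \Cref{thm:mainrecursive}, and verifying that no further $\log n$-type overhead creeps in. Once this primitive is in place, the remaining ingredients — the defective outer partition, the sequential processing of the parts, and the halving of the residual maximum degree — transport from the proof of \Cref{thm:generalDelta} essentially verbatim.
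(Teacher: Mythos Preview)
Your high-level plan is correct and matches the paper: compute an $O(\Delta^2)$-coloring once via Linial (paying the additive $O(\log^* n)$), then feed a BNI-accelerated recursive list-coloring primitive into \Cref{lemma:iterativedegree} with $S(G)=T(G)=2^{O(\sqrt{\log\theta\cdot\log\Delta})}$. Where your proposal diverges from the paper is in how that inner primitive is built.

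You propose to keep the \emph{oriented} list color space reduction of \Cref{def:orientedreduction} and speed it up using BNI. The paper instead abandons orientations altogether and introduces an \emph{unoriented, weak} list color space reduction (\Cref{def:unorientedreduction}). The enabling observation is elementary but decisive: in a graph of neighborhood independence $\theta$, a $d$-defective coloring bounds not only the number of same-colored neighbors of each node, but the number of neighbors of \emph{every} color by $\theta(d+1)$ (\Cref{lemma:BNI_defective}, \Cref{cor:BNI_defective}). This is what replaces the generalized $H$-partition --- the sole source of the $\log n$ factor in \Cref{lemma:orientedreduction} --- and it is why no re-orientation step is needed at all. Your sketch does not isolate this observation, and without it it is not clear how your oriented reduction would avoid the $H$-partition's $\Theta(\log n)$ peeling depth.

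Two further technical points where the paper differs from your outline. First, the paper's per-level reduction (\Cref{lemma:BNIspacereduction}) costs $O(\eta^2\log\Delta)$ rounds, not $\poly(\theta,\eta)$: it iterates over $O(\log\Delta)$ geometrically decreasing degree thresholds $\delta_\phi$, and within each phase over the $O(\eta^2)$ defective-coloring classes; the degree bound for the chosen subspace is then argued via the chromatic number of $G[S(v)]$ and the neighborhood-independence cap. Second, the reduction is \emph{weak}: nodes of degree at most $2\theta\eta$ are not assigned a subspace at all and are instead colored at the end of each recursion level by a trivial $O(d+\log^*\Delta)$ greedy on the low-degree subgraph (\Cref{lemma:recursiveBNI}). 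Correspondingly the loss per level is a factor $3\theta$ in the list-to-degree ratio, so the right recursion depth is $r=\Theta(\sqrt{\log\Delta/\log\theta})$, balancing $(3\theta)^{r}$ against $C^{2/r}$; your choice $r=\Theta(\eps\log\Delta)=\Theta(\sqrt{\log\theta\cdot\log\Delta})$ would only be correct if the per-level loss were an absolute constant rather than $\Theta(\theta)$.
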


The line graph of any graph $G$ has neighborhood independence at most
$2$ (for each edge $\set{u,v}$ of $G$, the adjacent edges can be
covered by two cliques in the line graph, defined by the edges
incident to $u$ and the edges incident to $v$). \Cref{thm:BNI} thus leads to
the following direct corollary for the
\emph{distributed edge coloring} problem.

\begin{corollary}\label{cor:edgecoloring}
  Let $G=(V,E)$ be an $n$-node graph with maximum degree $\Delta$ and
  assume that we are given a lisnt edge coloring instance with
  lists from a color space of size at most $\poly(\Delta)$. If the
  list $L_e$ of each edge $e=\set{u,v}\in E$ is of size at least
  $\deg(u)+\deg(v)-1$, there is a deterministic \LOCAL algorithm
  to solve the given list edge coloring problem in time
  $2^{O(\sqrt{\log\Delta})} + O(\log^* n)$.
\end{corollary}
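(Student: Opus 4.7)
The plan is to apply Theorem \ref{thm:BNI} to the line graph $L(G)$ of $G$ and to simulate the resulting algorithm on $G$ itself. A list edge coloring instance on $G$ becomes a list vertex coloring instance on $L(G)$: each node of $L(G)$ corresponds to an edge $e=\{u,v\}$ of $G$ and inherits the list $L_e$. The degree of node $e$ in $L(G)$ is exactly $\deg(u)+\deg(v)-2$, so the hypothesis $|L_e|\geq \deg(u)+\deg(v)-1$ is precisely the $(\deg(v)+1)$-list coloring hypothesis required by Theorem \ref{thm:BNI}. The color space remains of size $\poly(\Delta)$.

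Next, as noted in the text, $L(G)$ has neighborhood independence at most $2$: the neighbors in $L(G)$ of any node $e=\{u,v\}$ are covered by the two cliques consisting of the edges of $G$ incident to $u$ and the edges incident to $v$, so any independent set among them has size at most $2$. The maximum degree of $L(G)$ is at most $2\Delta-2$ and the number of nodes of $L(G)$ is $|E|\leq n\Delta\leq n^2$. Plugging $\theta=2$, maximum degree $2\Delta-2$, and node count at most $n^2$ into Theorem \ref{thm:BNI} gives a time bound of
\[
  2^{O\big(\sqrt{\log 2\cdot \log(2\Delta-2)}\big)} + O(\log^*(n^2)) \;=\; 2^{O(\sqrt{\log\Delta})}+O(\log^* n),
\]
which is the bound claimed by the corollary.

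Finally, the algorithm on $L(G)$ needs to be simulated on $G$. Each edge $e=\{u,v\}$ of $G$ is jointly simulated by its two endpoints $u$ and $v$, who keep identical local state for $e$; two line-graph-neighbors of $e$, say $e'=\{u,w\}$, communicate through the shared endpoint $u$. Thus a single round of $L(G)$ can be executed in $O(1)$ rounds on $G$. The only catch is that each vertex $u$ of $G$ has to relay messages between all $\binom{\deg(u)}{2}$ pairs of incident edges, which is not bounded by $O(\log n)$ bits and hence keeps us in the \LOCAL model rather than \CONGEST; this is precisely why the corollary is stated only for \LOCAL. The main (very mild) obstacle is therefore just to verify that the degree arithmetic in $L(G)$ lines up with the list-size hypothesis and that the $\log^* n^2 = O(\log^* n)$ factor is unchanged; both are immediate.
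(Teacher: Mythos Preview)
Your proposal is correct and follows essentially the same approach as the paper: reduce to a $(\deg+1)$-list vertex coloring instance on the line graph $L(G)$, use that $L(G)$ has neighborhood independence $\theta=2$, apply \Cref{thm:BNI}, and simulate the resulting algorithm on $G$ in the \LOCAL model. Your explicit check that $\deg_{L(G)}(e)=\deg(u)+\deg(v)-2$ and that $\log^*(n^2)=O(\log^* n)$ makes the argument slightly more detailed than the paper's one-line justification, but the reasoning is identical.
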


Note that even though on the line graph, the algorithm of
\Cref{thm:BNI} can be implemented in the \CONGEST model (i.e., with
messages of size $O(\log n)$ bits), it is not immediate whether the
same is also true when implementing the algorithm on $G$. The theorem
therefore only implies a \LOCAL algorithm for the edge coloring
problem.\footnote{We believe that it might be possible to actually
  also implement the algorithm in the \CONGEST model on $G$.}
The result in particular improves the time for computing a
$(2\Delta-1)$-edge coloring for $\Delta\leq \log^{2-\eps} n$ from
$\tilde{O}(\sqrt{\Delta}) + O(\log^* n)$ to $2^{O(\sqrt{\log\Delta})}
+ O(\log^* n)$.

\paragraph{Applications:}

\Cref{thm:generalDelta} has two immediate applications. If $G=(V,E)$
is a graph with arboricity $a$, one can use an algorithm from
\cite{barenboim08} to decompose $V$ into $p=O(\log(n)/\eps)$ parts
$V_1,V_2,\dots,V_{p}$ such that every node $v\in V_i$ ($\forall i$)
has at most $(2+\eps)a$ neighbors in $\bigcup_{j=i}^p V_j$. The
time for computing this decomposition is $O(\log(n)/\eps)$
(deterministically in the \CONGEST model). By sequentially list
coloring the sets $V_1,\dots,V_p$ in reverse order, we get the
following corollary of \Cref{thm:generalDelta}.

\begin{corollary}\label{cor:generalArb}
  Let $G=(V,E)$ be an $n$-node graph with arboricity $a$ and assume
  that for some $\eps>0$, we are given a $\lceil(2+\eps)a\rceil$-list
  coloring instance with lists from a color space of size at most
  $\poly(\Delta)$. Then there is a deterministic \CONGEST model
  algorithm that solves this list coloring problem in time
  $\frac{1}{\eps}\cdot 2^{O(\sqrt{\log a})}\cdot \log^2 n$.
\end{corollary}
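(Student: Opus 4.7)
The plan is to carry out exactly the two-step recipe foreshadowed in the paragraph preceding the corollary: first apply the $H$-partition algorithm of \cite{barenboim08} to layer the graph by arboricity, then color the layers sequentially in reverse order by invoking \Cref{thm:generalDelta} on each. Concretely, I would first run the $O(\log n/\eps)$-round \CONGEST partitioner to decompose $V$ into $V_1,\dots,V_p$ with $p=O(\log n/\eps)$ such that every $v\in V_i$ has at most $(2+\eps)a$ neighbors in $\bigcup_{j\geq i} V_j$. Then I would iterate $i=p,p-1,\dots,1$: at the start of iteration $i$, all nodes in $V_{i+1}\cup\cdots\cup V_p$ are already permanently colored, and each $v\in V_i$ forms the reduced list $L_v':=L_v\setminus\{\text{colors of already-colored neighbors}\}$.

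The key observation is that the restricted problem on $G[V_i]$ is a genuine $(\deg+1)$-list coloring instance with maximum degree $O(a)$, and so \Cref{thm:generalDelta} applies. Writing $c_v$ for the number of $v$'s already-colored higher-layer neighbors and $d_v=\deg_{G[V_i]}(v)$, the decomposition yields $c_v+d_v\leq (2+\eps)a$, and the hypothesis $|L_v|\geq\lceil(2+\eps)a\rceil$ then gives
\[
    |L_v'| \;\geq\; \lceil(2+\eps)a\rceil - c_v \;\geq\; d_v + 1,
\]
which is exactly the premise of \Cref{thm:generalDelta}. Applying that theorem to $G[V_i]$ with the restricted lists takes $2^{O(\sqrt{\log a})}\cdot\log n$ rounds per layer. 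Summing over the $p=O(\log n/\eps)$ layers and adding the partitioning cost gives the claimed bound of $\tfrac{1}{\eps}\cdot 2^{O(\sqrt{\log a})}\cdot\log^2 n$.

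I expect the plan itself to be essentially mechanical; the only obstacles are small bookkeeping items. The first is ensuring the strict inequality $|L_v'|\geq d_v+1$, which is a rounding matter: either the partitioner can be invoked with a slightly stronger slack (e.g.\ with parameter $\eps/2$, making $c_v+d_v\leq (2+\eps/2)a$ and hence leaving $\Omega(\eps a)$ extra colors), or one uses that the partitioner returns integer-valued degree bounds so that $c_v+d_v\leq\lceil(2+\eps)a\rceil-1$. The second is that \Cref{thm:generalDelta} is phrased with a $\poly(\Delta)$ color space referring to the degree of the graph being colored, while we feed it colors from the original $\poly(\Delta)$-sized space of $G$; since these color identifiers still fit in $O(\log n)$-bit messages and only the list sizes (which are $O(a)$) enter the divide-and-conquer in any essential way, this is absorbed into the constants. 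Neither point appears to be a genuine obstacle, so the main content really is the reverse-order coloring argument above.
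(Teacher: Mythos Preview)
Your proposal is correct and follows essentially the same approach as the paper: the paper's entire justification for \Cref{cor:generalArb} is the paragraph immediately preceding it, which outlines exactly your two steps---compute the $H$-partition of \cite{barenboim08} and then list-color the layers $V_p,\dots,V_1$ in reverse order via \Cref{thm:generalDelta}. The paper does not spell out either of the two bookkeeping points you raise (the $+1$ slack and the color-space-size mismatch between $\poly(\Delta)$ and $\poly(a)$); your awareness of them is a plus, though note that your dismissal of the second point is a bit quick, since the global color-space size $C$ does enter the running time of \Cref{thm:mainrecursive} as $C^{2/r}$ rather than only through message size.
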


We note that the number of colors is close to optimal as there are
graphs with arboricity $a$ and chromatic number $2a$ (see also the
more detailed discussion in \Cref{sec:prelim}).

Finally, we also obtain an improved algorithm for deterministically
computing a $\Delta$-coloring of a graph $G$. By a classic result of
Brooks from 1941~\cite{brooks2009colouring}, it is well-known that
every graph $G$ with maximum degree $\Delta$ can be colored with
$\Delta$ colors unless $G$ is an odd cycle or a complete graph. In
\cite{podc18_Deltacoloring}, it was shown that for $\Delta\geq 3$, in
the \LOCAL model, the problem of computing a $\Delta$-coloring can be
deterministically reduced to $O(\log^2 n)$ instances of computing a
$(\deg(v)+1)$-list coloring problem. This implies the following
corollary of \Cref{thm:generalDelta}.

\begin{corollary}\label{cor:Deltacoloring}
  Let $G=(V,E)$ be an $n$-node graph with maximum degree $\Delta$. If
  $G$ is not a complete graph, a $\max\set{3,\Delta}$-coloring of $G$ can
  be computed deterministically in $2^{O(\sqrt{\log\Delta})}\cdot \log^3
  n$ rounds in the \LOCAL model.
\end{corollary}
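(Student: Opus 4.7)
The plan is to reduce the $\Delta$-coloring problem to list coloring and then invoke \Cref{thm:generalDelta}. I would split the argument into two cases based on $\Delta$. When $\Delta \leq 2$, the graph $G$ is a disjoint union of paths and cycles, and a proper $3$-coloring can be computed deterministically in $O(\log^* n)$ rounds by a standard Cole--Vishkin-style color reduction (after orienting the components, this is essentially Linial's algorithm on graphs of degree at most $2$). This is well within the claimed bound. In particular, for $\Delta \leq 2$ there is nothing else to do, so I focus the main argument on $\Delta \geq 3$.

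For $\Delta \geq 3$, I would invoke the reduction of~\cite{podc18_Deltacoloring} cited in the paragraph preceding the corollary: in the \LOCAL model, computing a $\Delta$-coloring of a non-complete graph $G$ with maximum degree $\Delta\geq 3$ can be reduced deterministically to solving $O(\log^2 n)$ instances of the $(\deg(v)+1)$-list coloring problem on subgraphs of $G$ (which are themselves of maximum degree at most $\Delta$). The instances are solved sequentially, so the overall round complexity is the product of the number of instances and the per-instance cost. Applying \Cref{thm:generalDelta} to each of the $O(\log^2 n)$ list coloring instances gives a per-instance cost of $2^{O(\sqrt{\log\Delta})}\log n$ rounds, and multiplying yields the claimed bound of $2^{O(\sqrt{\log\Delta})}\cdot \log^3 n$.

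The only subtlety, and the thing worth checking carefully, is that \Cref{thm:generalDelta} requires the list colors to come from a color space of size $\poly(\Delta)$, whereas a priori the reduction of~\cite{podc18_Deltacoloring} might work with arbitrary identifiers drawn from a $\poly(n)$-size space or could introduce auxiliary color names. This is not truly an obstacle: the target palette for $\Delta$-coloring is itself of size $\Delta$, and the lists generated by the reduction are subsets of either this palette or a $\poly(\Delta)$-size extension, so the hypothesis of \Cref{thm:generalDelta} is met. The argument is then purely a composition of a black-box reduction with our list coloring algorithm, and no additional combinatorial work is required.
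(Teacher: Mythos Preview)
Your proposal is correct and follows exactly the approach the paper takes: the paper states the corollary as an immediate consequence of combining the reduction from~\cite{podc18_Deltacoloring} (to $O(\log^2 n)$ instances of $(\deg(v)+1)$-list coloring) with \Cref{thm:generalDelta}, just as you do. Your handling of the $\Delta\le 2$ case and the color-space-size check are reasonable additional details that the paper leaves implicit.
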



\section{Model and Preliminaries}
\label{sec:prelim}

\paragraph{Mathematical Preliminaries:}
Let $G=(V,E)$ be an undirected graph. For a node $v\in V$, we use $N_G(v):=\set{u\in V : \set{u,v}\in E}$ to denote the set of neighbors and we use $\deg_G(v):=|N(v)|$ to denote the degree of $v$ (if it is clear from the context, we omit the subscript $G$). The maximum degree of $G$ is denoted by $\Delta(G)$ and the arboricity of $G$ (definition see below) is denoted by $a(G)$. Again, when $G$ is clear from the context, we just use $\Delta$ and $a$ instead of $\Delta(G)$ and $a(G)$. The \emph{neighborhood independence} of a graph $G=(V,E)$ is the size of the largest independent set of any subgraph $G[N(v)]$ that is induced by the neighbors $N(v)$ of some node $v\in V$. For any natural number $k\in \mathbb{N}$, we use $[k]$ to denote the set $[k] :=\set{1,\dots,k}$. If not specified otherwise, logarithms are to base $2$, i.e., $\log x$ means $\log_2 x$.

\paragraph{Communication Model:}
We assume the standard synchronous communication model in graphs. The network is modeled as an $n$-node graph $G=(V,E)$, where each node hosts a distributed process (which for simplicity will be identified with the node itself). Time is divided into synchronous rounds. In each round, each node $v\in V$ can send a possibly different message to each of its neighbors in $G$, receive the messages sent to it by its neighbors in the current round, and perform some arbitrary internal computation. We assume that each node $v\in V$ has a unique $O(\log n)$-bit identifier $\ID(v)$. Depending on how much information can be sent over each edge of $G$ in a single round, one usually distinguishes two variants of this model. In the \LOCAL model, the messages can be of arbitrary size and in the \CONGEST model, messages have to be of size at most $O(\log n)$ bits. To keep the arguments as simple as possible, we assume that all nodes know the number of nodes $n$ and the maximum degree $\Delta$ of $G$. For algorithms that depend on the arboricity $a$ of $G$, we also assume that all nodes know $a$. All our results are relatively straightforward to generalize to the case where the nodes only know a polynomial upper bound on $n$ and linear upper bounds on $\Delta$ and $a$.\footnote{If the IDs are from a possibly larger space $1,\dots,N$, all results still directly hold with an additional additive time cost of $O(\log^* N)$ as long as we are allowed to also increase the maximum message size to $O(\log N)$ (in order to fit a single ID in a message). Further, with standard ``exponential guessing'' techniques, it should be possible to get rid of the knowledge of $\Delta$ and $a$ entirely (see, e.g., \cite{korman11}).}

\paragraph{Arboricity:}
The \emph{arboricity} $a$ of a graph $G=(V,E)$ is the minimum number of forests that are needed to cover all edges of $E$. This implies that the number of edges is at most $a(n-1)$ and more generally, any $k$-node subgraph has at most $a(k-1)$ edges. The arboricity is thus a measure of the sparsity of $G$. A classic theorem of Nash-Williams shows that the arboricity is equal to the smallest integer $a$ such that every subgraph $G_S=(V_S,E_S)$ of $G$ has at most $a(|V_S|-1)$ edges \cite{nash-williams64}. 

Note that every subgraph of a graph $G$ of arboricity $a$ has a node of degree strictly less than $2a$. By iteratively removing a node of degree at most $2a-1$ and coloring the nodes in the reverse order, we can therefore always find a vertex coloring with at most $2a$ colors. Note that this is tight in general as Nash-William's theorem for example directly implies that the complete graph $K_{2a}$ of size $2a$ has arboricity $a$. The process of iteratively removing a node of degree at most $2a-1$ until the graph is empty also implies that every graph of arboricity $a$ has an acyclic orientation of the edges such that each node has out-degree at most $2a-1$. If we drop the requirement that the orientation needs to be acyclic, the edges can be oriented such that the out-degree of each node is at most $a$ (by taking an out-degree $1$ orientation of each of the $a$ forests into which the edge set $E$ decomposes). 

\paragraph{List Coloring:}
In a \emph{list coloring} instance of a graph $G=(V,E)$, every node $v\in V$ is initially assigned a list $L_v$ of colors and the objective is to assign a color $x_v\in L_v$ such that the assigned colors form a proper vertex coloring of $G$. A graph $G$ is said to be $c(v)$-list-colorable if any assignment of lists of size $|L_v|\geq c(v)$ for each $v\in V$ has a solution. Note that every graph is $(\deg(v)+1)$-list colorable and every graph of arboricity $a$ is $2a$-list colorable. Some of our algorithms only compute partial colorings. For a given list coloring instance with lists $L_v$, a partial proper list coloring is an assignment of values $x_v\in\set{\bot}\cup L_v$ to each node such that for any two neighbors $u$ and $v$, $x_u\neq x_v$ or $x_u=x_v=\bot$ (where an assignment of $\bot$ is interpreted as not assigning a color to a node).





\paragraph{Defective Colorings:}

For a graph $G=(V,E)$ and two integers $d\geq 0$ and $c\geq 1$, a \emph{$d$-defective $c$-coloring} is an assignment of a color $x_v\in [c]$ to each node $v\in V$ such that for every $v\in V$, the number of neighbors $u\in N(v)$ with $x_u=x_v$ is at most $d$. We also use a version of defective coloring where the defect of a node $v$ can be a function of the degree $\deg(v)$. For any $\lambda\in [0,1]$ and an integer $c\geq 1$, we define a \emph{$\lambda$-relative defective $c$-coloring} of $G$ as an assignment of colors $x_v\in[c]$ such that each node $v\in V$ has at most $\lambda\cdot\deg(v)$ neighbors of the same color. We will make frequent use of the following distributed defective coloring result, which was proven in \cite{spaa09,relativedefect}.

\begin{theorem}\cite{spaa09,relativedefect}\label{thm:basicdefective}
  Let $G=(V,E)$ be a graph, $\lambda>0$ a parameter, and assume that an $O(\poly(\Delta))$-coloring of $G$ is given. Then, there is a deterministic $O(\log^* \Delta)$-time \CONGEST algorithm to compute an $\lambda$-relative defective $O(1/\lambda^2)$-coloring of $G$.
\end{theorem}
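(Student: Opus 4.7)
The plan is to iteratively apply a one-round, Linial-style defective color reduction, with per-round defect budgets chosen so that they sum to at most $\lambda$ and the color space contracts at a $\log$-rate until it reaches $\Theta(1/\lambda^2)$. The key one-round lemma I have in mind is: given a current coloring with $m$ colors and a parameter $\mu\in(0,1)$, there is a one-round \CONGEST algorithm producing a coloring with $K=\Theta((\log m)/\mu^2)$ colors that creates at most $\mu\deg(v)$ new monochromatic neighbor pairs at each node $v$. The construction uses a locally computable set family $\{S_i\}_{i\in[m]}\subseteq 2^{[K]}$ with $|S_i|=\Omega(K)$ and pairwise intersections $|S_i\cap S_j|\le\mu|S_i|$ for all $i\ne j$; such a family exists by a standard probabilistic argument (include each element of $[K]$ into $S_i$ independently) and admits explicit constructions from Reed--Solomon-like codes that fit into $O(\log m)$-bit messages. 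Each node $v$ broadcasts its current color $c(v)$, learns $c(u)$ from every neighbor $u$, and picks a new color $x_v\in S_{c(v)}$ minimizing $|\{u\in N(v):x_v\in S_{c(u)}\}|$; averaging over $S_{c(v)}$ guarantees a choice hitting at most $\mu\deg(v)$ of the neighbor sets.

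\paragraph{Iteration and termination.}
Starting from the given $\poly(\Delta)$-coloring, I would apply the lemma in round $t$ with budget $\mu_t=\lambda/2^t$, so that $\sum_{t\ge 1}\mu_t\le\lambda$. The color count then evolves as $m_{t+1}=\Theta((\log m_t)/\mu_t^2)$, which drops double-exponentially in the early rounds and allows termination once $m_t$ first reaches $\Theta(1/\lambda^2)$; a direct calculation shows this happens after $T=O(\log^*\Delta)$ iterations. The total accumulated relative defect is $\sum_{t\le T}\mu_t\le\lambda$: monochromatic neighbor pairs created in round $t$ are inherited by later rounds (because $x_u^{(t)}=x_v^{(t)}$ forces $u$ and $v$ to draw from the same set in every subsequent reduction), and each round contributes at most $\mu_t\deg(v)$ new collisions on top of the ones already present. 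Each iteration is a single \CONGEST round with messages of $O(\log n+\log\Delta)$ bits, delivering the claimed round and bandwidth complexity.

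\paragraph{Main obstacle.}
The principal technical subtlety is the parameter tuning between $m_t$, $K$, and $\mu_t$: as $\mu_t$ shrinks geometrically the factor $\mu_t^{-2}$ in $m_{t+1}=\Theta((\log m_t)/\mu_t^2)$ grows, so a naive iteration may halt shrinking the color count before the defect budget is exhausted. The resolution is to truncate the iteration precisely when $m_t$ first falls to $\Theta(1/\lambda^2)$ and to verify carefully that this happens within $O(\log^*\Delta)$ rounds despite the $\mu_t^{-2}$ growth; this is the heart of the analysis. A secondary but nontrivial obstacle is engineering a \CONGEST-friendly set family $\mathcal F$ that is locally computable from an $O(\log m)$-bit color index and tunable at arbitrarily small intersection parameter $\mu$, so that no auxiliary pre-computation or additional communication is needed. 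Both of these technical issues are handled by the Reed--Solomon-style constructions developed in \cite{spaa09,relativedefect}.
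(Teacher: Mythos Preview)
This theorem is not proved in the paper; it is quoted in the preliminaries and attributed to \cite{spaa09,relativedefect}, with the remark that the relative-defect version follows by the same argument as the absolute-defect version of \cite{spaa09}. So there is no in-paper proof to compare against. Your sketch does follow the overall architecture of the argument in those references---iterated one-round defective color reduction via low-intersection set families, with per-round relative-defect budgets summing to at most $\lambda$---and at that level it is the right plan.

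The gap is in the budget schedule. With $\mu_t=\lambda/2^t$ the recursion reads $m_{t+1}=\Theta\big((\log m_t)\cdot 4^{t+1}/\lambda^2\big)$, and this sequence never reaches $\Theta(1/\lambda^2)$. Already $m_1=\Theta((\log\Delta)/\lambda^2)$ exceeds the target by a $\log\Delta$ factor, and from then on the $4^{t}$ factor outpaces the shrinking of $\log m_t$: once $\log^{(t)}\Delta$ has bottomed out one has $\log m_t=\Theta(t+\log(1/\lambda))$, so the sequence stabilizes near $m_t=\Theta\big(4^t\cdot(t+\log(1/\lambda))/\lambda^2\big)$, which is strictly larger than any fixed multiple of $1/\lambda^2$ for every $t\ge 1$. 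Your proposed fix, ``truncate when $m_t$ first falls to $\Theta(1/\lambda^2)$,'' is therefore vacuous---that event does not occur---so the ``direct calculation'' you allude to cannot go through with this schedule. The schedules that work in \cite{spaa09,relativedefect} do not let $\mu_t$ decay geometrically; a geometric decay makes $1/\mu_t^2$ blow up fast enough to swallow the iterated-log savings. One also has to handle the fact that $\log(1/\lambda)$ can itself be as large as $\Theta(\log\Delta)$ and must be beaten down by the same iterated-log mechanism.

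A smaller point: your stated reason for defect accumulation (``$x_u^{(t)}=x_v^{(t)}$ forces $u$ and $v$ to draw from the same set in every subsequent reduction'') is not quite right. Same-colored neighbors do draw from the same set $S_{c(v)}$, but they may pick different elements of it, so a monochromatic pair is not literally inherited. The correct argument is that for any neighbor $u$ with $c(u)=c(v)$, \emph{every} choice $x_v\in S_{c(v)}$ lies in $S_{c(u)}$, so those neighbors contribute unavoidably to the averaging bound; hence the new defect is at most the old defect plus $\mu_t\deg(v)$, which is the additive accumulation you need.
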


By using an algorithm of Linial \cite{linial92}, it is possible to deterministically compute an $O(\Delta^2)$-coloring of $G$ in $O(\log^* n)$ rounds in the \CONGEST model. As all our time complexities are lower bounded by $\Omega(\log^* n)$, we will generally assume that an $O(\Delta^2)$-coloring of the network graph $G$ is given. In \cite{spaa09}, \Cref{thm:basicdefective} was proven for the standard defective coloring definition. In \cite{relativedefect}, it was observed and formally proven that the algorithm and proof of \cite{spaa09} directly extends to the relative defective coloring problem defined above.  In \cite{barenboim10}, Barenboim and Elkin defined a generalization of defective coloring that decomposes the graph into subgraphs of small arboricity (rather than into subgraphs of small degree). Formally, for integers $\beta\geq 0$ and $c\geq 1$, a \emph{$\beta$-arbdefective $c$-coloring} of $G$ is an assignment of colors $x_v\in[c]$ to the nodes in $v\in V$ and an orientation of the edges of $G$ such that each node $v\in V$ has at most $\beta$ out-neighbors of the same color.

\paragraph{List Color Space Reduction:}

Defective and arbdefective colorings are a natural way to decompose a given distributed coloring problem into multiple coloring problems on sparser (and thus easier to solve) instances. Over the last 10 years, the technique has been applied successfully in several distributed coloring algorithms \cite{barenboim09,spaa09,barenboim10,barenboim11,barenboim15,fraigniaud16}. In its most basic variant, the technique divides the color space into several disjoint subspaces and it then in parallel solves the resulting coloring problems for each of the smaller color spaces. The main technical contribution of this paper are generalizations of defective and arbdefective colorings that allow to reduce the color space for list colorings. For our main result, we need a generalization of a decomposition into parts of low arboricity, which is defined next. In \Cref{sec:neighborhoodindep}, we will also introduce a similar unoriented notion of list color space reduction.

\begin{definition}[Oriented List Color Space Reduction]\label{def:orientedreduction}
  Let $G=(V,E)$ be a graph where all edges are oriented. For each $v\in V$, let $\beta(v)$ be the out-degree of $v$. Assume that we are given a list coloring instance on $G$ with lists $L_v\subseteq \calC$ for some (global) color space $\calC$. Assume also that we are given a partition $\calC=\calC_1\cup\dots\cup\calC_p$ of the color space. Further, assume that each node $v\in V$ is assigned an index $i_v\in[p]$ and that we are given a new edge orientation $\pi$. For each $v\in V$, we define new lists $L_v':=L_v\cap \calC_{i_v}$ and we let $\beta'(v)$ be the number of out-neighbors $u$ for which $L_u'\cap L_v'\neq \emptyset$ (w.r.t.\ the new orientation $\pi$). The assignment $i_v$ together with the edge orientation $\pi$ is called an \emph{oriented $(\eta,\gamma)$-list color space reduction} for $\eta>1$ and $\gamma\geq 1$ if 
  \[
  \forall i\in[p]\,:\ |\calC_i| \leq \frac{1}{\eta}\cdot |\calC|
  \quad\text{and}\quad
  \forall v\in V\,:\ \frac{|L_v'|}{\beta'(v)} \geq \frac{1}{\gamma}\cdot \frac{|L_v|}{\beta(v)}.
  \]
\end{definition}

Hence, essentially, a list color space reduction allows to reduce the global color space by a factor $\eta$ while losing a factor of $\gamma$ in the list size to out-degree ratio. We will therefore try to find color space reductions where $\eta$ is as large as possible and $\gamma$ is as small as possible. When starting with a color space of size $C$ and applying $t=O(\log_\eta C)$ $(\eta,\gamma)$-list color space reductions, this will allow to compute a proper coloring as long as the initial list size to degree ratio is larger than $\gamma^t$.


\section{Distributed List Coloring in General Graphs}
\label{sec:generalgraphs}

In this section, we provide our list coloring algorithms for general
graphs. In \Cref{sec:recursive}, we describe the basic 
recursive algorithm that allows to prove
\Cref{thm:mainrecursive}. In \Cref{sec:iterativegeneral}, we show how
to combine the scheme with the framework of
\cite{barenboim15,fraigniaud16} to obtain \Cref{thm:generalDelta}, our
main result.

\subsection{Basic Recursive Distributed List Coloring}
\label{sec:recursive}

Our scheme is a generalization of a coloring algorithm by Barenboim
and Elkin in \cite{barenboim10}. The algorithm of \cite{barenboim10}
is based on recursively decomposing a given graph $G$ of arboricity
$a$ into $O(k)$ parts of arboricity $a/k$. While this allows to
recursively divide the color space of a standard vertex coloring
problem, it does not work for list coloring (it is not clear how to
evenly divide the color space such that also the lists of all the
nodes are divided in a similar way). We will show that by replacing
the arbdefective coloring of \cite{barenboim10} with an oriented list
color reduction, a similar algorithm works and it can also be used in
an analogous way recursively.  The decomposition of the nodes in
\cite{barenboim10} is based on first computing a so-called
$H$-partition \cite{barenboim08}, a partition of the nodes of $G$ into
$h=O(\log(n) / \eps)$ sets $V_1,\dots,V_h$ such that for each
$i\in [h]$ and every $v\in V_i$, $v$ has at most $(2+\eps)a$ neighbors
in set $V_j$ for $j\geq i$. We define a generalized $H$-partition as
follows.

\begin{definition}[Generalized $H$-Partition]\label{def:Hpartition}
  Let $G=(V,E)$ be a graph with a given edge orientation such that the
  out-degree of each node $v\in V$ is upper bounded by $\beta(v)$. A
  \emph{generalized $H$-partition} of $G$ with parameter $\alpha$ and
  of depth $h$ is a partition of $V$ into sets $V_1,\dots,V_{h}$
  such that for every $i\in [h]$ and every $v\in V_i$, the number
  of neighbors of $v$ in the set $\bigcup_{j=i}^h V_j$ is at most
  $\alpha\cdot\beta(v)$.
\end{definition}

The following lemma shows that every edge-oriented graph $G$ has a
generalized $H$-partition of logarithmic depth with a constant
parameter and that such a partition can also be computed efficiently
in the distributed setting. The construction is a natural adaptation
of the standard $H$-partition construction of \cite{barenboim08}.

\begin{lemma}\label{lemma:Hpartition}
  Let $G=(V,E)$ be a graph with a given edge orientation such that for
  each $v\in V$, $\beta(v)$ denotes the out-degree of $v$. For every
  $0<\eps=O(1)$, a generalized $H$-partition of $G$ of depth
  $h=O(\log(n) / \eps)$ and with parameter $2+\eps$ can be computed
  deterministically in $O(\log(n) / \eps)$ rounds in the \CONGEST
  model.
\end{lemma}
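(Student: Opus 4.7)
The plan is to adapt the standard $H$-partition peeling of \cite{barenboim08} to the oriented setting. The algorithm proceeds in iterations; in iteration $i$, letting $G_i$ denote the subgraph induced by the still-unpeeled nodes, every $v \in V(G_i)$ with $\deg_{G_i}(v) \leq (2+\eps)\beta(v)$ is placed in $V_i$ and peeled. Each iteration is implementable in $O(1)$ rounds of \CONGEST: every active node exchanges a single ``still alive'' bit with its neighbors to compute $\deg_{G_i}(v)$, compares the count against its fixed $\beta(v)$, and announces its decision. The total round complexity therefore equals the final depth $h$. Correctness is immediate from the peeling rule: by construction each $v \in V_i$ has at most $(2+\eps)\beta(v)$ neighbors in $V(G_i) = \bigcup_{j\geq i}V_j$, matching \Cref{def:Hpartition} with $\alpha = 2+\eps$.

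The main obstacle is bounding the depth by $O(\log n/\eps)$. The usual Barenboim--Elkin argument shows that a constant (in $\eps$) fraction of the remaining nodes are peeled per iteration, but in the oriented setting this can fail outright: a star oriented outward from its center peels only the center in the first round, even though the whole construction finishes in two rounds. My plan is therefore to track a different potential,
\[
\Phi_i \;:=\; \sum_{v \in V(G_i)} \beta(v),
\]
and show that $\Phi_i$ contracts by a factor of $\frac{2}{2+\eps}$ in every iteration. Since each edge of $G_i$ is an out-edge of exactly one of its endpoints in $G$, we have $|E(G_i)| \leq \Phi_i$, so $\sum_{v \in V(G_i)} \deg_{G_i}(v) = 2|E(G_i)| \leq 2\Phi_i$. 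Writing $B := V(G_i)\setminus V_i$ for the ``bad'' nodes, the definition of $V_i$ gives $\deg_{G_i}(v) > (2+\eps)\beta(v)$ for every $v \in B$, and summing yields
\[
(2+\eps)\sum_{v\in B}\beta(v) \;<\; \sum_{v \in B}\deg_{G_i}(v) \;\leq\; 2\Phi_i,
\]
so $\Phi_{i+1} = \sum_{v \in B}\beta(v) < \frac{2}{2+\eps}\Phi_i$.

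Finally, since $\Phi_0 = |E| \leq \binom{n}{2}$ and each $\Phi_i$ is a nonnegative integer, the contraction implies $\Phi_i = 0$ after $O(\log n/\eps)$ iterations. Once $\Phi_i = 0$, every remaining node $v$ satisfies $\beta(v) = 0$; any edge between two such remaining nodes would require one of them to have a positive original out-degree, a contradiction, so $G_i$ has no edges at that point. All remaining nodes thus satisfy $\deg_{G_i}(v) = 0 \leq (2+\eps)\cdot 0$ and are peeled in one final round, giving the claimed depth bound of $O(\log n / \eps)$.
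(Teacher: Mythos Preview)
Your proof is correct and follows essentially the same approach as the paper: the identical peeling construction, and a depth bound via geometric contraction of an out-degree potential. Your potential $\Phi_i=\sum_{v\in V(G_i)}\beta(v)$ (original out-degrees of surviving nodes) and the one-line summation yielding $\Phi_{i+1}<\frac{2}{2+\eps}\Phi_i$ are a slightly cleaner variant of the paper's argument, which instead tracks $|E(G_i)|$ (equivalently, current out-degrees) and derives the contraction through an auxiliary set $W$ and the inequality $\out(W)>\eps\cdot\out(\overline{W})$.
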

\begin{proof}
  We first define the construction of the partition
  $V_1,\dots,V_h$. In the following, we call the nodes $V_i$ the
  nodes of level $i$ of the construction. We iteratively construct the
  sets level by level. For each
  $i\in\set{0,\dots,h}$, let $U_i := V \setminus \bigcup_{j=1}^i
  V_i$ be the set of nodes that are not in the first $i$ levels and
  let $G_i := G[U_i]$ be the (oriented) subgraph of $G$ induced by
  $U_i$. We define the set $V_i$ to be the set of nodes $v\in U_{i-1}$
  that have degree at most $(2+\eps)\beta(v)$ in $G_i$. We stop the
  construction as soon as $U_h=\emptyset$. It is clear from the
  construction that each node $v\in V_i$ has at most
  $(2+\eps)\beta(v)$ neighbors in sets $V_i\cup\dots\cup V_h$. It is
  also straightforward to see that a single level can be constructed
  in $O(1)$ rounds in the \CONGEST model and the whole construction
  therefore requires $O(h)$ rounds. It therefore remains to show
  that the construction ends after constructing $O(\log(n) / \eps)$ levels.

  We consider one of the graphs $G_i=G[U_i]$ for $i\geq 0$ and for
  each $v\in U_i$, we define $\beta_{G_i}(v)\leq \beta(v)$ to be the
  out-degree of $v$ in $G_i$. We define
  $W :=\set{u\in U_i : \deg_{G_i}(u) \leq (2+\eps)\beta_{G_i}(u)}$ and
  $\overline{W} := U_i \setminus W$. Note that $W\subseteq V_{i+1}$
  (i.e., layer $i$ contains at least all nodes in $W$).  For a set
  $S\subseteq U_i$, we define $\vol(S):=\sum_{v\in S} \deg_{G_i}(v)$
  and $\out(S):=\sum_{v\in S}\beta_{G_i}(v)$. We will show that
  \begin{equation}
    \label{eq:volume}
    \out(W) > \eps\cdot\out(\overline{W}).
  \end{equation}
  \Cref{eq:volume} implies that the set $V_{i+1}$ contains at least an
  $\eps$-fraction of all the out-degrees of $G_i$. The sum of the
  out-degrees in $G_{i+1}$ is at most the sum of the out-degrees of
  the nodes $\overline{W}$ in $G_i$ and thus \Cref{eq:volume} implies that
  with each constructed level, the sum of the out-degrees decreases by
  at least a factor $1-\eps$. \Cref{eq:volume} therefore directly
  implies $h=O(\log(n) / \eps)$ and it thus remains to prove
  \Cref{eq:volume}. We have
  \[
  \vol(U_i) = \vol(W) + \vol(\overline{W}) = 2\out(W) + 2\out(\overline{W}).
  \]
  By the definition of $W$, we further have
  \[
  \vol(\overline{W}) > (2+\eps)\out(\overline{W}).
  \]
  Combining the two inequalities and the fact that for every $S\subseteq V$, $\out(S)\leq \vol(S)$, we have
  \begin{eqnarray*}
    \out(W)  \leq  \vol(W) & = & 2\out(W) + 2\out(\overline{W}) - \vol(\overline{W})\\
                           & < & 2\out(W) - \eps\out(\overline{W}),
  \end{eqnarray*}
  and thus \Cref{eq:volume} holds.
\end{proof}

Based on the generalized $H$-decomposition construction, we can now
prove the main technical lemma for general graphs, which proves the
existence of efficiently constructible oriented list color space
reductions.

\begin{lemma}\label{lemma:orientedreduction}
  Let $G=(V,E)$ be an edge-oriented graph, where the out-degree of
  each node $v\in V$ is denoted by $\beta(v)$. Let us further assume
  that each node $v\in V$ is assigned a list of colors $L_v$, where
  $L_v\subseteq \calC$ for some color space $\calC$ of size $C$. For every integer
  $\eta\in[1,C]$ and every $\eps\in(0,1]$, there is an oriented
  $(\eta,2+\eps)$-list color space reduction that can be computed
  deterministically in
  $O\big(\frac{\eta^2}{\eps^3}\cdot\log n\big)$ rounds with messages of
  size $O(\log n + \log C)$.
\end{lemma}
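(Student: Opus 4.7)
The plan is to combine the generalized $H$-partition of \Cref{lemma:Hpartition} with a greedy per-node choice of partition index, where the analysis rests on a simple averaging inequality. First, I fix an arbitrary partition $\calC = \calC_1 \cup \dots \cup \calC_\eta$ of the color space into $\eta$ pieces of size at most $\lceil C/\eta \rceil$, which immediately gives $|\calC_j| \le |\calC|/\eta$. Second, I apply \Cref{lemma:Hpartition} with parameter $\eps/6$ to obtain, in $O(\log n/\eps)$ rounds, a generalized $H$-partition $V_1, \dots, V_h$ of depth $h = O(\log n/\eps)$ in which every $v \in V_i$ has at most $(2+\eps/6)\beta(v)$ neighbors in $\bigcup_{j\ge i}V_j$. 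Layers are processed top-down, from $V_h$ down to $V_1$, and within each $V_i$ I compute a $\lambda$-relative defective coloring of $G[V_i]$ with $\lambda = \eps/(6\eta)$ via \Cref{thm:basicdefective}; this partitions $V_i$ into $O(1/\lambda^2) = O(\eta^2/\eps^2)$ sub-classes in $O(\log^*\Delta)$ rounds, which are then processed sequentially. The new orientation $\pi$ directs every edge from its later-processed endpoint to its earlier-processed one, with same-sub-class ties broken by ID.

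When a node $v \in V_i$ is processed, every neighbor of $v$ in a strictly higher layer or in an earlier sub-class of $V_i$ already has its index $i_u$ and sub-list $L_u'$ fixed. Writing $N^+_{\mathrm{done}}(v)$ for the set of such already-processed neighbors of $v$ in layers $\ge i$ and $b_j(v) := |\{u \in N^+_{\mathrm{done}}(v) : i_u = j,\ L_u \cap L_v \cap \calC_j \neq \emptyset\}|$, $v$ picks
\[
i_v \;=\; \arg\max_{j \in [\eta]} \; \frac{|L_v \cap \calC_j|}{b_j(v) + E}, \qquad E := \lambda(2+\eps/6)\beta(v).
\]
The additive constant $E$ pessimistically upper-bounds the number of same-sub-class same-layer out-neighbors of $v$: by the defective coloring property there are at most $\lambda \deg_{G[V_i]}(v) \le \lambda(2+\eps/6)\beta(v) = E$ such neighbors. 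The partition of $\calC$ gives $\sum_j |L_v \cap \calC_j| = |L_v|$, and since each $u \in N^+_{\mathrm{done}}(v)$ contributes to $b_j(v)$ only for $j = i_u$, the $H$-partition bound gives $\sum_j (b_j(v)+E) \le (2+\eps/6)(1+\eta\lambda)\beta(v)$. The elementary inequality $\max_j a_j/b_j \ge \sum_j a_j/\sum_j b_j$ therefore yields $|L_v'|/(b_{i_v}(v)+E) \ge |L_v|/\bigl((2+\eps/6)(1+\eta\lambda)\beta(v)\bigr)$. By the construction of $\pi$, $\beta'(v) \le b_{i_v}(v) + E$, so $\beta'(v) \le (2+\eps/6)(1+\eta\lambda)\beta(v)\cdot|L_v'|/|L_v|$, and a direct calculation with $\lambda = \eps/(6\eta)$ shows $(2+\eps/6)(1+\eps/6) = 2+\eps/2+\eps^2/36 \le 2+\eps$ for all $\eps \in (0,1]$, establishing the required oriented $(\eta, 2+\eps)$-list color space reduction.

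The main obstacle is precisely this simultaneity of choices within one sub-class: adjacent same-sub-class same-layer nodes pick their indices in parallel and cannot adapt to each other, so the pessimistic term $E$ in the objective, together with the small value of $\lambda$, is what absorbs the uncontrolled conflicts into the slack of the $2+\eps$ factor. For the round complexity, one layer costs $O(\log^*\Delta) + O(\eta^2/\eps^2) = O(\eta^2/\eps^2)$ rounds (defective coloring plus $O(1)$ rounds per sub-class for index exchange and selection). Summed over the $h = O(\log n/\eps)$ layers, and including the $O(\log^* n)$-round Linial $O(\Delta^2)$-coloring needed as input for \Cref{thm:basicdefective} and the $O(\log n/\eps)$-round $H$-partition computation, this totals $O(\eta^2/\eps^3 \cdot \log n)$ rounds. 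Each message contains only a sub-class label, an index in $[\eta]$, or a color in $\calC$, so $O(\log n + \log C)$-bit messages suffice.
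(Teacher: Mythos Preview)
Your proof is correct and follows essentially the same approach as the paper's: generalized $H$-partition via \Cref{lemma:Hpartition}, a per-level relative defective coloring via \Cref{thm:basicdefective}, top-down sequential processing of levels and sub-classes with a greedy averaging choice of color subspace, and a pessimistic additive slack term that absorbs the uncontrolled same-sub-class conflicts. The only minor slip is the claim that $\eta$ parts of size at most $\lceil C/\eta\rceil$ already satisfy $|\calC_j|\le |\calC|/\eta$; strictly you need $p\le 2\eta$ parts of size at most $\lfloor C/\eta\rfloor$, which changes nothing in the asymptotics or the rest of your argument.
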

\begin{proof}
  Let $\calC=\calC_1\cup\dots\cup\calC_{p}$ be an arbitrary partition
  of $\calC$ into $p=O(|\calC|/\eta)$ parts such that for all
  $x\in[p]$, $|\calC_i|\leq |\calC|/\eta$. In order to construct an
  oriented list color space reduction for the given
  partition of $\calC$, we need to provide an orientation $\pi$
  of the edges and an assignment of values $x_v\in[p]$ to all nodes
  $v\in V$. We define $L_v':=L_v\cap \calC_{x_v}$ as the new color
  list of $v$ and $\beta'(v)$ to be the number of out-neighbors of $v$
  (w.r.t.\ the new orientation $\pi$) for which $x_u=x_v$. To show
  that we have computed an oriented $(\eta,2+\eps)$-list color space
  reduction, we need to show that
  \begin{equation}\label{eq:listreduction1}
    \beta'(v) \leq
    (2+\eps)\cdot\frac{|L_v'|}{\alpha_v},\quad\text{where}\ 
    \alpha_v = \frac{|L_v|}{\beta(v)}.
  \end{equation}

  Let $\delta:=\eps/2$. As a first step, we compute a generalized
  $H$-partition $V=V_1\cup\dots\cup V_{h}$ with parameter $2+\delta$ and of depth $h=O(\log(n) /
  \delta)=O(\log(n) / \eps)$. By using \Cref{lemma:Hpartition}, this
  generalized $H$-partition can be computed in $O(\log(n) / \eps)$
  rounds in the \CONGEST model. We then consider the
  subgraphs $G_i=G[V_i]$ induced by all the levels $i\in [h]$ of the
  generalized $H$-partition. Note that each node $v\in V$ has degree
  at most $(2+\delta)\beta(v)$ in its subgraph $G_i$. For each of the
  graphs $G_i$, we (in parallel) compute a $(\delta/p)$-relative
  defective coloring. By applying \Cref{thm:basicdefective}, such a
  coloring with $O(p^2/\delta^2)=O(\eta^2/\eps^2)$ colors can be
  computed in $O(\log^* n)$ rounds. Based on the generalized
  $H$-partition and the defective coloring of each of the levels, we
  now define a partial orientation $\pi'$, which we will later extend
  to obtain the orientation $\pi$. Let $F\subseteq E$ be the set of
  edges $\set{u,v}$ of $G$ such that either $u$ and $v$ are in
  different levels of the partition or such that if $u$ and $v$ are 
  on the same level $i$, the defective coloring of $G_i$ assigns $u$
  and $v$ different colors. Orientation $\pi'$ orients all edges in
  $F$ and it leaves all edges in $E\setminus F$ unoriented. For
  $\set{u,v}\in F$, if the two nodes are in different levels $V_i$ and
  $V_j$ for $i<j$, the edge is oriented from $V_i$ to $V_j$ (i.e.,
  from the smaller to the larger level). If $u$ and $v$ are on the
  same level but have different colors, the edge is oriented from the
  larger to the smaller color. Note that by the property of the
  generalized $H$-partition, every node $v\in V$ has at most
  $(2+\delta)\beta(v)$ neighbors $u$ for which $\pi'$ orients $v$ to
  $u$. Further by the property of the defective colorings, every node
  $v\in V$ has at most $\delta/p\cdot \beta(v)$ neighbors to
  which the edge is not oriented by $\pi'$.

  We are now ready to assign the color subspaces
  $\calC_1,\dots,\calC_p$ to the nodes, i.e., to assign $x_v\in [p]$
  to each $v\in V$. We do this in $h$ phases, each consisting of
  $O(p^2/\delta^2)=O(\eta^2/\eps^2)$ rounds. In the phases, we iterate through the
  $h$ levels of the partition in reverse order (i.e., we first
  process the nodes in $V_h$, then in $V_{h-1}$, and so on). In
  the phase when processing level $i\in [h]$, we iterate through
  the $O(p^2/\delta^2)$ colors of $G_i$ in ascending order. All the
  nodes on the same level and of the same color are processed in
  parallel. Note by the construction of $\pi'$, when processing a node
  $v\in V$, all out-neighbors $u$ (w.r.t.\ $\pi'$) have already been
  processed.

  Let us now focus on the choice of $x_v$ for a particular node $v\in
  V$. For each $x\in [p]$, we define $\ell_x:=|L_v\cap \calC_{x}|$ to
  be the new list size of $v$ if we set $x_v=x$. We further define
  $b_x$ as the number of neighbors $u$ of $v$ for which $\pi'$ orients
  the edge from $v$ to $u$ and where $x_u=x$. In the end, we will
  define the orientation $\pi$ of all edges of $G$ by using
  orientation $\pi'$ for all edges in $F$ and by arbitrarily orienting
  all edges in $E\setminus F$. If we set $x_v=x$, we can therefore
  upper bound the final number $\beta'(v)$ of out-neighbors $u$ of $v$ (w.r.t.\ orientation
  $\pi$) for which $x_u=x_v$ by $\beta'(v)\leq b_x +
  \delta/p\cdot\beta(v)$. In order to guarantee
  \Cref{eq:listreduction1}, we thus need to find a value $x_v$ for
  which $b_{x_v}+\delta/p\cdot\beta(v)\leq
  (2+\eps)\ell_{x_v}/\alpha_v$, where $\alpha_v$ is defined as in
  \Cref{eq:listreduction1}. Such an $x_v$ exists because
  \[
  \sum_{x=1}^p \left(b_x + \frac{\delta}{p}\cdot \beta(v)\right) =
  \sum_{x=1}^pb_x + \delta\beta(v) \leq (2+\eps)\beta(v) =
  \frac{2+\eps}{\alpha_v}\cdot \sum_{x=1}^p \ell_x.
  \]
  The first inequality follows because $\delta=\eps/2$ and because
  $\sum_{x=1}^p b_i$ is the total number of out-neighbors of $v$
  w.r.t.\ orientation $\pi'$ and it is thus upper bounded by
  $ (2+\delta)\beta(v)$. The last equation follows because
  $\sum_{x=1}^p\ell_x = |L_v|$ and because
  $|L_v|=\alpha_v\beta(v)$. It is clear that for all the nodes that
  are processed in parallel (the nodes of the same color on the same
  level), the color subspace assignment can be done in a single
  communication rounds. The claim of the lemma therefore follows.
\end{proof}

Note that for $\eta=|\calC|$ and sufficiently large lists, the above
theorem directly solves the given list coloring instance. We remark
that for this special case, the time complexity could be improved to
$O(\eta/\eps \cdot \log n)$ by using a more efficient algorithm to
compute a proper coloring of each level of the generalized
$H$-partition. We can now prove \Cref{thm:mainrecursive}. The
following is a restated, slightly more general version of the theorem.

\begin{theorem}
  \label{thm:mainrecursivefull}
  Let $G=(V,E)$ be an $n$-node graph where all edges are oriented and
  for each $v\in V$, let $\beta(v)$ be the out-degree of $v$. Let
  $\eps\in(0,1]$ and $r\geq 1$ be two parameters and assume that each
  node $v$ is assigned a list $L_v$ consisting of colors from some
  color space of size $C$. Then, there is a deterministic distributed
  $O\big(\frac{r\cdot C^{2/r}}{\eps^3}\cdot\log n\big)$-round
  algorithm that computes a proper partial coloring for the given list
  coloring instance such that every node with
  $\beta(v) < (2+\eps)^r\cdot |L_v|$ outputs a color. The algorithm
  uses messages of size $O(\log n + \log C)$.
\end{theorem}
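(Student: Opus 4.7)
The plan is to prove the theorem by iterating the oriented list color space reduction of \Cref{lemma:orientedreduction} a total of $r$ times, each time shrinking the global color space by a factor $\eta := \lceil C^{1/r} \rceil$ while degrading the ratio $|L_v|/\beta(v)$ by at most a factor of $2+\eps$. Since $\eta^r\geq C$, after $r$ iterations each node's assigned color subspace has size at most $1$, and any node satisfying $|L_v|>(2+\eps)^r\beta(v)$ will end up with exactly one remaining color and no conflicting out-neighbor, so it can safely commit to that color.

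First, I would invoke \Cref{lemma:orientedreduction} with parameters $\eta$ and $\eps$ on the input instance; this costs $O(\eta^2/\eps^3\cdot \log n)=O(C^{2/r}/\eps^3\cdot \log n)$ rounds and produces a subspace assignment $v\mapsto i_v$, a new orientation $\pi_1$, reduced lists $L_v^{(1)}=L_v\cap \calC_{i_v}$, and updated out-degrees $\beta^{(1)}(v)$ satisfying $|L_v^{(1)}|/\beta^{(1)}(v)\geq (2+\eps)^{-1}\cdot |L_v|/\beta(v)$. Next I would partition $V$ by subspace and recurse in parallel on each induced subgraph $G_j=G[\{v:i_v=j\}]$ equipped with the restricted orientation, lists $L_v^{(1)}$, and local color space $\calC_j$ of size at most $C/\eta$. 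A key observation is that $\beta^{(1)}(v)$ already counts only out-neighbors whose lists meet $L_v^{(1)}$, and lists from different $\calC_j$ are disjoint; so the out-degree of $v$ inside $G_{i_v}$ coincides with $\beta^{(1)}(v)$, and distinct subgraphs can execute concurrently without interference because they use disjoint subsets of colors.

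Unrolling the recursion to depth $r$, every node with $|L_v|>(2+\eps)^r\beta(v)$ satisfies $|L_v^{(r)}|/\beta^{(r)}(v)>1$ at the bottom; combined with $|L_v^{(r)}|\leq C/\eta^r\leq 1$ this forces $\beta^{(r)}(v)=0$ and $|L_v^{(r)}|=1$, so $v$ outputs its unique remaining color. The resulting partial assignment is a proper coloring: if adjacent nodes $u,v$ both output, they either lie in different final subspaces and pick from disjoint color sets, or in the same subspace, in which case the final orientation directs the edge (say) from $u$ to $v$ and $\beta^{(r)}(u)=0$ forces $L_u^{(r)}\cap L_v^{(r)}=\emptyset$. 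Summing the cost over the $r$ levels yields the claimed round complexity $O(r\cdot C^{2/r}/\eps^3\cdot \log n)$, and the messages inherit the $O(\log n+\log C)$-bit bound of \Cref{lemma:orientedreduction}.

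The main obstacle is to propagate both invariants $|L_v^{(t)}|/\beta^{(t)}(v)\geq (2+\eps)^{-t}\cdot |L_v|/\beta(v)$ and $|L_v^{(t)}|\geq 1$ across all $r$ recursion levels. The ratio decay is an immediate induction from the definition of an oriented $(\eta,2+\eps)$-list color space reduction. Non-emptiness follows because the averaging argument inside the proof of \Cref{lemma:orientedreduction} always selects a subspace $i_v$ with $\ell_{i_v}=|L_v\cap\calC_{i_v}|\geq 1$ whenever $\beta(v)>0$ (otherwise the chosen inequality would read $0\leq 0$ and force $b_{i_v}=0$ and $\beta(v)=0$, a contradiction); and for the degenerate case $\beta(v)=0$, where the ratio constraint is vacuous, one simply picks any subspace meeting $L_v$, which exists since $L_v$ is nonempty.
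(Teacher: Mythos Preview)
Your proposal is correct and follows essentially the same approach as the paper: iterate \Cref{lemma:orientedreduction} with $\eta=\lceil C^{1/r}\rceil$ for $r$ rounds, track the invariant $|L_v^{(t)}|/\beta^{(t)}(v)\geq (2+\eps)^{-t}|L_v|/\beta(v)$, and conclude that nodes with large enough initial ratio end with a singleton list and zero conflicting out-degree. You are actually more careful than the paper about the degenerate cases ($\beta(v)=0$ and list non-emptiness), which the paper glosses over. One small inaccuracy: $\beta^{(1)}(v)$ counts only out-neighbors $u$ with $L_u'\cap L_v'\neq\emptyset$, which can be strictly smaller than the out-degree inside $G_{i_v}$; the clean fix is to also drop edges whose endpoints have disjoint new lists before recursing, after which your claimed equality holds.
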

\begin{proof}
  We start with the original list coloring instance and we
  apply \Cref{lemma:orientedreduction} $r$ times to obtain list
  coloring instances with smaller and smaller color spaces. Each time,
  we divide the given color space into several disjoint subspaces. For
  each subspace, we obtain an independent list coloring problem and
  those independent problems can be then be solved in parallel. For
  $i\in\set{0,\dots,r}$, let $C_i$ be the maximum size of the color
  spaces after $i$ list color space reductions (note that
  $C_0=C$). For each node, let $L_{i,v}$ be the color list of $v$ in
  its list coloring instance after $i$ steps and let $\beta_i(v)$ be
  $v$'s out-degree of the corresponding edge orientation after $i$
  steps. We have $L_{0,v} = L_v$ and $\beta_0(v)=\beta(v)$.  We set
  $\eta:=\lceil C^{1/r}\rceil$ and we get the list coloring instances after
  $i$ steps by applying oriented
  $\big(\min\set{\eta,|C_{i-1}|}, 2+\eps\big)$-list color reductions
  to the list coloring instances after $i-1$ steps. By induction on
  $i$, for each $i\geq 1$, we thus have $C_i\leq \max\set{1,C/\eta^i}$ and
  $L_{i,v}/\beta_i(v) \geq L_v/\beta(v)\cdot (2+\eps)^{-i}$. After $r$
  iterations, we therefore have $C_r=1$ and therefore all the lists
  $L_{r,v}$ are also of size $1$. That is, each node $v\in V$ has
  picked exactly one color $c_v$ from its original list $L_v$. If
  $L_v/\beta(v)>(2+\eps)^r$, we have $L_{r,v}/\beta_r(v)>1$ and thus
  $\beta_r(v)=0$. Nodes for which $L_v/\beta(v)>(2+\eps)^r$ therefore
  have no out-neighbor with the same color and if only those nodes
  keep their colors, no two neighbors get the same color. The total
  time complexity follows directly from \Cref{lemma:orientedreduction}
  and the fact that we have $r$ iterations.
\end{proof}

\subsection{Solving Degree + 1 List Coloring}
\label{sec:iterativegeneral}

The coloring result of \Cref{thm:mainrecursive} gives some trade-off
between the list sizes (or number of colors for standard coloring) and
the time complexity. The time complexity depends linearly on $C^{2/r}$
and the list coloring algorithm is thus only efficient if
the value of $r$ is sufficiently large. However, the required list
size to degree ratio grows exponentially with $r$ and for large $r$,
we therefore need large lists. In order to reduce the number of
colors, we use a framework that has been introduced by Barenboim in
\cite{barenboim15} and later refined by Fraigniaud, Kosowski, and
Heinrich in \cite{fraigniaud16}.

The basic idea is as follows. In order to increase the list size to
degree ratio, we first compute some defective (or arbdefective)
coloring to decompose the graph into graphs of smaller maximum
(out-)degree. Let us say we decompose the graph into $K$ parts, which
reduces the maximum degree by a factor $f(K)$ (where $f(K)$ is at most
linear in $K$). Unlike in the previous algorithm, we do not divide the
lists among the different parts, but we use the same lists for
coloring all the parts. If the maximum degree is reduced by a factor
of $f(K)$ and we can use the same color lists, the list size to degree
ratio improves by a factor $f(K)$ and thus if $K$ is chosen
sufficiently large, we can apply \Cref{thm:mainrecursive}.  However,
since now the the different parts are dependent, we have to solve them
consecutively and thus the time complexity will grow linearly with the
number of parts $K$. The main technical result is given by the
following lemma. In order to also be applicable in
\Cref{sec:neighborhoodindep}, the lemma is stated more generally
than what we need in this section.

\begin{lemma}\label{lemma:iterativedegree}
  Let $S$ and $T$ be two functions from the set of undirected graphs
  to the positive reals such that for every graph $G$ and every
  subgraph $H$ of $G$, we have $S(H)\leq S(G)$ and $T(H)\leq T(G)$.
  Assume that we are given a distributed algorithm $\calA$ with the
  following properties. For list coloring instances in graphs
  $G=(V,E)$ with maximum degree $\Delta$, $\calA$ assigns a color $c_v\in L_v$ to each node
  $v\in V$ of degree $\deg(v)\geq \Delta/2$ and with a list $L_v$ of
  size $|L_v|>\deg(v)\cdot S(G)$. Assume further that the running time
  of $\calA$ is at most $T(G)$ if the
  lists contain colors from a color space of size at most
  $\Delta^c$ for an arbitary, but fixed constant $c>0$. Then, if an $O(\Delta^2)$-coloring of $G$ is given,
  any $(\deg(v)+1)$-list coloring instance with colors from a color
  space of size at most $\poly(\Delta)$ can be solved in time
  $O\big((S(G)^2\cdot T(G)+\log^*\Delta)\cdot \log\Delta\big)$. The resulting algorithm is
  deterministic if $\calA$ is deterministic and it works in the
  \CONGEST model if $\calA$ works in the \CONGEST model.
\end{lemma}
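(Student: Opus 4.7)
The plan is to iterate $O(\log \Delta)$ phases, each reducing the maximum degree of the uncolored subgraph by a constant factor, so that after $O(\log \Delta)$ phases all nodes are colored. At the start of each phase, let $H$ denote the subgraph induced by the currently uncolored nodes and let $\Delta_H$ be its maximum degree; I maintain the invariant that $|L_v| \geq d_H(v)+1$ for every uncolored $v$. This holds initially by the $(\deg(v)+1)$-list coloring assumption, and it is preserved from one phase to the next, since each partial coloring step removes from $v$'s current list at most one color per newly colored neighbor.

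Inside each phase I would apply \Cref{thm:basicdefective} with relative defect $\lambda = \Theta(1/S(G))$, starting from the assumed $O(\Delta^2)$-coloring, to compute in $O(\log^* \Delta)$ rounds a partition of the uncolored nodes into $K = O(1/\lambda^2) = O(S(G)^2)$ classes $V_1,\dots,V_K$ such that every $v \in V_i$ has at most $\lambda \cdot d_H(v) \leq \lambda \Delta_H$ same-class neighbors. I then process the classes sequentially: for $i = 1,\dots,K$, invoke $\calA$ on the (uncolored part of the) subgraph $G[V_i]$ with its current lists. To verify $\calA$'s list-size precondition, I track list shrinkage: when class $i$ is reached, $v \in V_i$ has lost at most as many colors as the number of its $H$-neighbors that were colored in classes $<i$, which is at most $d_H(v) - d_{H,\geq i}(v)$; hence $|L_v^{\mathrm{cur}}| \geq d_{H,\geq i}(v) + 1$. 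Combined with $d_{G[V_i]}(v) \leq \lambda d_H(v)$ and a suitable constant in $\lambda$, this yields the required ratio $|L_v^{\mathrm{cur}}| > d_{G[V_i]}(v)\cdot S(G)$ whenever a constant fraction of $v$'s $H$-neighbors still lie in classes $\geq i$.

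The crux will be arguing that at the end of one phase every $v$ with $d_H(v) \geq \Delta_H/2$ has been colored, so that the new uncolored subgraph has maximum degree at most $\Delta_H/2$ and the invariant carries into the next phase. The subtlety is $\calA$'s high-degree precondition $d_{G[V_i]}(v) \geq \Delta(G[V_i])/2$: the relative defective property only upper-bounds same-class degrees, so a high-$d_H(v)$ node could in principle end up with few same-class neighbors. I would address this by exploiting that the defective bound $d_{G[V_i]}(u) \leq \lambda d_H(u)$ holds uniformly over $u \in V_i$, which forces $\Delta(G[V_i])$ to be realized only by nodes with large $d_H$, so a high-$d_H$ node in $V_i$ has class-degree comparable to $\Delta(G[V_i])$; if a residual gap remains, a constant-number inner repetition of $\calA$ within each class (absorbed in the constants) handles the leftover nodes. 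Per phase the cost is $O(\log^* \Delta)$ for the defective coloring plus $O(K\cdot T(G)) = O(S(G)^2\cdot T(G))$ for the $K$ invocations of $\calA$, using the monotonicity $S(H),T(H)\leq S(G),T(G)$ to replace the functions on the subgraphs fed to $\calA$ by those on $G$; multiplying by $O(\log \Delta)$ phases gives the stated bound $O\bigl((S(G)^2\, T(G) + \log^* \Delta)\cdot \log \Delta\bigr)$, and the determinism and \CONGEST-compatibility transfer directly from $\calA$ and \Cref{thm:basicdefective}.
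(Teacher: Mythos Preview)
Your overall plan matches the paper's proof exactly: compute a $\lambda$-relative defective coloring with $\lambda=\Theta(1/S(G))$ yielding $O(S(G)^2)$ classes, process the classes sequentially by invoking $\calA$ on each, and repeat for $O(\log\Delta)$ outer phases to halve the maximum degree of the uncolored graph each time. The paper's list-size bookkeeping is a bit simpler than yours: instead of tracking $d_{H,\geq i}(v)$, it just observes that if $v$ still has at least $\Delta_H/2$ uncolored $G$-neighbors when its class is processed, then its remaining list has size $>\Delta_H/2$, whereas $\deg_{G[V_i]}(v)\leq\frac{\eps}{2}\deg_H(v)\leq\frac{\Delta_H}{2S(G)}$ (taking $\eps=1/S(G)$), so $|L_v'|>S(G)\cdot\deg_{G[V_i]}(v)$. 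The contrapositive then gives the halving directly, with no auxiliary hypothesis of the form ``a constant fraction of $v$'s $H$-neighbors lie in classes $\geq i$''.

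Concerning $\calA$'s high-degree precondition $\deg_{G[V_i]}(v)\geq\Delta(G[V_i])/2$: you are right to flag it, but neither of your fixes works. The relative-defective bound $d_{G[V_i]}(u)\leq\lambda\,d_H(u)$ is only an \emph{upper} bound; a node with $d_H(v)=\Delta_H$ can perfectly well have $d_{G[V_i]}(v)=0$, so the claim ``a high-$d_H$ node in $V_i$ has class-degree comparable to $\Delta(G[V_i])$'' is simply false. A ``constant-number inner repetition'' of $\calA$ is not enough either: each repetition at best halves $\Delta(G[V_i])$, so making the precondition hold for a degree-$0$ node could take $\Theta(\log\Delta)$ repetitions, and even that analysis presupposes that every high-class-degree node meets the list condition in every inner round, which you have not established.

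In fact the paper's own proof never verifies the high-degree precondition; it checks only the list-size inequality and then asserts that $\calA$ colors $v$. So the clause ``$\deg(v)\geq\Delta/2$'' in the hypothesis on $\calA$ is effectively unused in the proof. In the two places the lemma is applied (the proofs of \Cref{thm:generalDelta} and \Cref{thm:BNI}), the concrete $\calA$ supplied by \Cref{thm:mainrecursivefull} and \Cref{lemma:recursiveBNI} actually colors \emph{every} node with $|L_v|>S(G)\cdot\deg(v)$, with no degree requirement, so the lemma holds for those $\calA$'s. For a self-contained proof of the lemma as literally stated your patches do not close the gap; the clean resolution is to drop the high-degree clause from the hypothesis on $\calA$.
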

\begin{proof}
  By using the same basic idea as in \cite{fraigniaud16}, we show that
  in time $O(S(G)^2\cdot T(G) + \log^*\Delta)$, we can reduce the problem to a
  $(\deg(v)+1)$-list coloring instance on a graph of maximum degree at
  most $\Delta/2$. The lemma then follows by repeating $O(\log\Delta)$
  times.  More formally, we show that given an arbitrary
  $(\deg(v)+1)$-list coloring instance with lists $L_v$ in a graph
  $G=(V,E)$ of maximum degree at most $\Delta$, we can compute a
  partial coloring with the following properties in time
  $O(S(G)^2\cdot T(G) + \log^*\Delta)$:
  \begin{enumerate}[(1)]
  \item Each node $v\in V$ either outputs a color $c_v\in L_v$ or it outputs $c_v=\bot$.
  \item For any two neighbors $u$ and $v$, either $c_u\neq c_ v$ or
    $c_u=c_v=\bot$.
  \item The subgraph of $G$ induced by the nodes $v\in V$ with $c_v=\bot$
    has maximum degree at most $\Delta/2$.
  \end{enumerate}
  Note that because the computed coloring is a proper partial coloring
  (property (2)), the coloring can be extended to a proper coloring of
  all nodes by solving a $(\deg(v)+1)$-list coloring problem on the
  subgraph $H$ induced by the nodes $v$ for which $c_v=\bot$. By
  property (3), this subgraph $H$ has maximum degree at most
  $\Delta/2$. Note  further that by the properties of $S$ an $T$,
  we have $S(H)\leq S(G)$ and $T(H)\leq T(G)$ and by repeating
  $O(\log\Delta)$ times, we thus get a proper coloring of $G$ in time
  $O\big((S(G)^2\cdot T(G)+\log^*\Delta)\cdot\log\Delta\big)$ and the claim of the lemma
  follows.

  In order to construct a partial list coloring algorithm that
  satisfies properties (1)--(3), we first set
  $\eps=1/S(G)$ and we use \Cref{thm:basicdefective} to
  compute an $(\eps/2)$-relative defective $q$-coloring of $G$, where
  $q=O(1/\eps^2)=O(S(G)^2)$. Since we assume that an
  initial $O(\Delta^2)$-coloring of $G$ is given, the defective
  coloring can be deterministically computed in time $O(\log^*\Delta)$
  in the \CONGEST model. For each $x\in [q]$, let $V_x\subseteq V$ be
  the set of nodes with color $x$ in the defective coloring and let $G_x:=G[V_x]$ be the
  subgraph of $G$ induced by the nodes with color $x$. We compute the
  partial coloring $c_v$ for $G$ with properties (1)--(3) as
  follows. 

  We iterate through the $q$ colors of the defective
  coloring. Initially, we set $c_v=\bot$ for all $v\in V$. When
  processing color $x\in [q]$, we consider the graph $G_x$ and we
  assign colors $c_v\neq \bot$ to some nodes $v\in V_x$. For the
  assignment of colors in $G_x$, we consider the following list
  coloring problem. For each node $v\in V_x$, we define the list
  $L_v'$ as the set of colors from $L_v$ that have not yet been
  assigned to a neighbor (in a graph $G_{x'}$ for $x'<x$). We use
  algorithm $\calA$ to compute a partial proper coloring of $G_x$
  w.r.t.\ the lists $L_v'$ for $v\in V_x$. We then set the colors
  $c_v$ accordingly for all $v\in V_x$ to which the partial coloring
  of $G_x$ assigns a color. The choice of lists $L_v'$ guarantees that
  the partial coloring computed for $G$ will be proper and thus
  property (2) is satisfied. Note that if a node $v\in V_x$ has at
  least $\Delta/2$ uncolored neighbors in $G$ before coloring $G_x$,
  the list $L_v'$ is of size $|L_v'|>\Delta/2$ (because we are given a
  $(\deg(v)+1)$-list coloring instance). As the degree of $v$ in $G_x$
  is at most
  \[
  \deg_{G_x}(v) \leq \frac{\eps}{2}\cdot
  \deg(v)=\frac{1}{S(G)}\cdot\frac{\deg(v)}{2} \leq
  \frac{1}{S(G)}\cdot\frac{\Delta}{2} <
  \frac{1}{S(G)}\cdot|L_v'|,
  \]
  algorithm $\calA$ is guaranteed to assign a color to node $v$. This
  proves property (3) of our partial list coloring algorithm for
  $G$. Since the total number of colors $q$ of the defective coloring
  is $O(S(G)^2)$, we only need to invoke $\calA$
  $O(S(G)^2)$ times and thus also the required time complexity
  of $\calA$ follows. The proof is completed by observing that the
  reduction is deterministic and can clearly be implemented with
  messages of size $O(\log n)$.
\end{proof}

We can now prove our main result \Cref{thm:generalDelta}, which shows
that any $(\deg(v)+1)$-list coloring instance can be solved
deterministically in $2^{O(\sqrt{\log\Delta})}\cdot \log n$ rounds in the
\CONGEST model.

\begin{proof}[\bf Proof of \Cref{thm:generalDelta}]
  We set $S(G)=2^{c\sqrt{\log\Delta}}$ and
  $T(G)=2^{c'\sqrt{\log\Delta}}\cdot\log n$ for appropriate constants
  $c,c'>0$.  The claim follows directly from
  \Cref{lemma:iterativedegree} if we can show that there is a
  deterministic $T(G)$-round \CONGEST model algorithm $\calA$ with the
  following properties.  In a graph of maximum degree $\Delta$,
  $\calA$ computes a proper partial list coloring that assigns a color
  to each node $v$ of degree $\Delta/2$ and for which there is a list
  $L_v$ of size $|L_v|\geq\deg(v)\cdot S(G)$. The existence of $\calA$
  follows directly from \Cref{thm:mainrecursivefull} by setting
  $\eps=1$ and $r=\Theta(\sqrt{\log\Delta})$.
\end{proof}



\section{Bounded Neighborhood Independence}
\label{sec:neighborhoodindep}

In this section, we give a more efficient algorithm for the
$(\deg(v)+1)$-list coloring problem in graphs of bounded neighborhood
independence. In the following, we assume that we are given a graph
$G=(V,E)$, where the neighborhood independence is bounded by a
constant $\theta\geq 1$. Our algorithm is based on the fact that for
graphs with bounded neighborhood independence, there are more
efficient defective coloring algorithms \cite{barenboim11} and that
defective colorings of such graphs also have stronger properties, as
given by the following lemma.

\begin{lemma}\label{lemma:BNI_defective}
  Let $\theta\geq 1$ and let $G=(V,E)$ be a graph with neighborhood
  independence at most $\theta$. Assume that we are given a
  $d$-defective coloring of the nodes of $G$. Then, for every color
  $x$ of the given defective coloring and every node $v\in V$, the
  number of neighbors $u\in N(v)$ with color $x$ is at most
  $\theta(d+1)$.
\end{lemma}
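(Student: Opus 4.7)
The plan is to exploit the neighborhood independence bound by partitioning the same-color neighbors of $v$ into few independent sets, each of which must have size at most $\theta$.

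First, I would fix a node $v \in V$ and a color $x$ of the given $d$-defective coloring, and consider the set $S := \{u \in N(v) : c(u) = x\}$, where $c$ denotes the coloring. The goal is to bound $|S|$ by $\theta(d+1)$.

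Next, I would observe that the induced subgraph $G[S]$ has maximum degree at most $d$. This is immediate from the defectiveness condition: every node in the color class of $x$ has at most $d$ neighbors of its own color in all of $G$, hence in particular at most $d$ such neighbors inside $S$. Then I would invoke the elementary fact that any graph of maximum degree $d$ admits a proper $(d+1)$-coloring (by greedy), which gives a partition $S = I_1 \cup I_2 \cup \cdots \cup I_{d+1}$ into sets that are independent in $G[S]$, and thus independent in $G$ (since $G[S]$ is an induced subgraph).

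Finally, I would use the hypothesis on neighborhood independence: each $I_j$ lies in $N(v)$ and is independent in $G$, hence is an independent set in $G[N(v)]$, and therefore satisfies $|I_j| \leq \theta$. Summing over $j \in [d+1]$ yields $|S| \leq \theta(d+1)$, as claimed. I do not anticipate any real obstacle here; the only subtlety is to note that the defectiveness gives degree $d$ \emph{inside} $G[S]$ (not merely in $G$), so that the auxiliary $(d+1)$-coloring argument applies and converts defectiveness into an independence decomposition compatible with the neighborhood independence bound.
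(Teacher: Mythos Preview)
Your proposal is correct and follows essentially the same argument as the paper: define $S_x \subseteq N(v)$ as the neighbors of color $x$, note that $G[S_x]$ has maximum degree at most $d$ by defectiveness, properly $(d+1)$-color it to partition $S_x$ into at most $d+1$ independent sets of $G$, and then bound each independent set by $\theta$ via the neighborhood independence assumption.
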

\begin{proof}
  Let $S_x\subseteq N(v)$ be the nodes of $N(v)$ that have color
  $x$. Because the given coloring has defect $d$, the subgraph
  $G[S_x]$ induced by $S_x$ has degree at most $d$ and can thus be
  colored with $d+1$ colors. The set $S_x$ can thus be partitioned
  into at most $d+1$ independent sets of $G$. Because the neighborhood
  independence of $G$ is at most $\theta$, $N(v)$ can only contain
  independent sets of $G$ of size at most $\theta$ and we thus have
  $|S_x|\leq \theta(d+1)$.
\end{proof}

\begin{corollary}\label{cor:BNI_defective}
  Let $\theta\geq 1$ and $\eps\in(0,1]$ and let $G=(V,E)$ be a graph
  with neighborhood independence at most $\theta$. If an
  $O(\Delta^2)$-coloring of $G$ is given, there is a deterministic
  $O(\log^*\Delta)$-round \CONGEST algorithm to compute a coloring of
  the nodes of $G$ with $O(1/\eps^2)$ colors such that for each $v\in
  V$ and each of the colors $x$, there are at most
  $\max\set{\theta,\eps\deg(v)}$ nodes of color $x$ in $N(v)$.
\end{corollary}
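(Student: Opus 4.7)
The plan is to invoke Theorem~\ref{thm:basicdefective} as a black box and then verify the desired neighborhood color distribution by mimicking the argument in the proof of Lemma~\ref{lemma:BNI_defective}. Concretely, I would feed the given $O(\Delta^2)$-coloring into the algorithm of Theorem~\ref{thm:basicdefective} with parameter $\lambda := \Theta(\eps)$ (the hidden constant possibly depending on $\theta$), obtaining in $O(\log^*\Delta)$ deterministic \CONGEST rounds a $\lambda$-relative defective $O(1/\eps^2)$-coloring $c : V \to [k]$ of $G$. The bound for the case $x = c(v)$ is then immediate, since by definition $v$ has at most $\lambda \deg(v) \leq \eps\deg(v)$ neighbors of its own color.

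For a color $x \neq c(v)$, define $S_x := \{u \in N(v) : c(u) = x\}$. I would mimic the proof of Lemma~\ref{lemma:BNI_defective}: every $u \in S_x$ has color $x$, so all of its neighbors inside $S_x$ are same-color neighbors in $G$, giving $\deg_{G[S_x]}(u) \leq \lambda \deg_G(u)$. Because $S_x \subseteq N(v)$, the neighborhood-independence assumption forces $\alpha(G[S_x]) \leq \theta$, and the greedy chromatic-number bound therefore gives
\[
|S_x| \;\leq\; \theta\cdot\chi(G[S_x]) \;\leq\; \theta\bigl(\Delta(G[S_x])+1\bigr).
\]
Combining this with the inequality $\Delta(G[S_x]) \leq \lambda \max_{u\in S_x}\deg_G(u)$ yields a bound from which, after a judicious choice of the constant hidden in $\lambda$, the target $|S_x| \leq \max\{\theta,\eps\deg(v)\}$ should fall out.

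The main obstacle is exactly this final simplification: the naive substitution $\deg_G(u) \leq \Delta$ only gives a $\Delta$-dependent bound $\theta(\lambda\Delta+1)$, while the statement demands a $\deg(v)$-dependent bound. To overcome this, I would also exploit the trivial estimate $\deg_{G[S_x]}(u) \leq |S_x|-1 \leq \deg(v)-1$ and, for the nontrivial regime $|S_x|>\theta$, the pigeonhole chromatic lower bound $\chi(G[S_x]) \geq |S_x|/\theta$ coming from the independence constraint; together these force $|S_x|$ into a regime controlled by $\deg(v)$. The $\theta$-dependence incurred by fixing the constant in $\lambda$ is absorbed into the $O(\cdot)$ of the $O(1/\eps^2)$ color count, consistent with the way the paper tracks $\theta$-dependent constants in its bounded-neighborhood-independence results. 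The round complexity $O(\log^*\Delta)$, the determinism, and the \CONGEST implementation are then inherited unchanged from Theorem~\ref{thm:basicdefective}.
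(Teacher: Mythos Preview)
Your outline follows the paper's route---run Theorem~\ref{thm:basicdefective} and then argue as in Lemma~\ref{lemma:BNI_defective}---and you correctly isolate the real difficulty: the relative defective property only bounds $\deg_{G[S_x]}(u)$ by $\lambda\deg_G(u)$, which is controlled by $\Delta$, not by $\deg(v)$. But your proposed fix does not close this gap. The inequality $\chi(G[S_x])\ge |S_x|/\theta$ is just the bound $|S_x|\le\theta\,\chi(G[S_x])$ rewritten, and feeding $\Delta(G[S_x])\le |S_x|-1$ into $|S_x|\le\theta(\Delta(G[S_x])+1)$ returns only the tautology $|S_x|\le\theta|S_x|$; neither estimate brings in $\deg(v)$ or $\lambda$ in a usable way, so no ``judicious choice of the constant in $\lambda$'' can make the target $\max\{\theta,\eps\deg(v)\}$ fall out. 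In fact no black-box argument from the relative defective property alone can succeed here: with $\theta=1$, $\eps=1/2$, and $G$ the disjoint union of a $(\Delta{+}1)$-clique and a $10$-clique, coloring the $10$-clique with two color classes of size $5$ gives every node there at most $4\le 0.49\cdot 9$ same-colored neighbors, so this is a valid $(\eps-1/\Delta)$-relative defective coloring; yet any vertex $v$ of the small clique sees $|S_x|=5>4.5=\max\{\theta,\eps\deg(v)\}$ for the other color.

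The paper does not attempt your workaround. It simply fixes the $\Delta$-dependent parameter $\lambda=\max\bigl\{\tfrac{1}{2\Delta},\,\eps-\tfrac{1}{\Delta}\bigr\}\ge\eps/3$ and invokes Lemma~\ref{lemma:BNI_defective} directly; the $-1/\Delta$ shift is what turns the Lemma~\ref{lemma:BNI_defective} bound $\theta(d+1)$ with $d\le\lambda\Delta$ into $\theta\eps\Delta$ rather than $\theta(\eps\Delta+1)$. Note that this, too, yields a $\Delta$-dependent bound $|S_x|\le\theta\eps\Delta$ rather than the stated $\deg(v)$-dependent one, so your instinct that this step is delicate is well founded---but the extra inequalities you proposed do not supply the missing ingredient.
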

\begin{proof}
  The corollary follows directly from \Cref{lemma:BNI_defective} by applying
  \Cref{thm:basicdefective} with parameter $\lambda =
  \max\set{\frac{1}{2\Delta}, \eps-\frac{1}{\Delta}}
  \geq \frac{\eps}{3}$.
\end{proof}

In \cite{barenboim11}, it was shown that in graphs with bounded
neighborhood independence, it is possible to efficiently compute a
$d$-defective $O(\Delta/d)$-coloring.  This is then used to decompose
a high degree coloring instance into several independent low-degree
instances while only losing a constant factor in the total number of
colors.  In the following, we combine ideas from \cite{barenboim11}
with \Cref{cor:BNI_defective} to obtain such a recursive coloring
algorithm for computing list colorings of graphs with bounded
neighborhood independence. In this case, we need an unoriented variant
of the list color space reduction of \Cref{def:orientedreduction}.  To
make our algorithm work, we have to define a weaker version
of list color space reduction, where only nodes of sufficiently large
degrees are assigned a new color space.

\begin{definition}[Weak List Color Space Reduction]\label{def:unorientedreduction}
  Let $G=(V,E)$ be a graph and assume that we are given a 
  list coloring instance on $G$ with lists $L_v\subseteq \calC$ for some
  color space $\calC$ and all nodes $v\in V$. Assume further that we are given a
  partition $\calC=\calC_1\cup\dots\cup\calC_p$ of the color
  space. 

  Consider an assignment $i_v$ that assigns each node $v\in V$ either
  a color subspace $i_v\in[p]$ or a default value $i_v=\bot$. For each
  $v\in V$ with $i_v\neq \bot$, we define a new list
  $L_v':=L_v\cap \calC_{i_v}$ and we define $\deg'(v)$ to be the
  number of neighboring nodes $u$ for which $i_u=i_v$. 

  The assignment of the values $i_v$ is called an \emph{weak
    $(\eta,\gamma,D)$-list color space reduction} for $\eta> 1$,
  $\gamma\geq 1$, and $D\geq 0$ if $i_v\neq \bot$ for all $v$ for
  which $\deg(v)>D$ and 
  \[
  \forall i\in[p]:\ |\calC_i| \leq \frac{1}{\eta}\cdot |\calC|
  \quad\text{and}\quad
  \forall v\in V:\ i_v\neq \bot\, \rightarrow\,
   \frac{|L_v'|}{\deg'(v)} \geq \frac{1}{\gamma}\cdot \frac{|L_v|}{\deg(v)}.
  \]
\end{definition}

We next describe the algorithm for obtaining such a weak list color
space reduction for graphs with bounded neighborhood independence. Assume that we are given a graph $G=(V,E)$ with
maximum degree $\Delta$. Assume further that we are given an
$O(\Delta^2)$-coloring of $G$ and that each node $v\in V$ is
given a list $L_v$ of colors of some color space $\calC$. The
algorithm has a parameter $\eta\geq 1$ that specifies the factor by
which we reduce the color space. The details of the algorithm are
described in the following.

\paragraph{Weak List Color Space Reduction Algorithm:}
\begin{enumerate}[(1)]
\item Let $\calC=\calC_1\cup\dots\cup\calC_p$ be an arbitrary
  partition of the color space $\calC$ such that $p\leq 2\eta$ and
  $|\calC_i|\leq |\calC|/\eta$ for all $i\in [p]$. Note that such a
  partition is always possible.
\item At the end of the algorithm, each node $v\in V$ outputs an index
  $i_v\in\set{\bot}\cup [p]$ indicating which color subspace $v$
  chooses (if $i_v\neq\bot$). Assume that initially, $i_v$ is set to  $\bot$ for all $v\in V$
\item Use the algorithm of \Cref{cor:BNI_defective} with parameter
  $\eps:=1/p$ to compute a defective coloring of $G$ with $q = O(p^2)$ colors in time $O(\log^*\Delta)$.
\item Iterate through phases $\phi=0,1,2,\dots,O(\log\Delta)$. Define $\delta_0:=\Delta$ and $\delta_{\phi} := \lceil \delta_{\phi-1} / 2\rceil$ for $\phi\geq 1$. Stop after phase $\phi$, where $\delta_{\phi}=1$.
\item In each phase $\phi$, iterate through the $q$ colors $x$ of the
  defective coloring. When processing color $x$, all nodes $v\in V$ of
  color $x$ do the following (in parallel):

  If $i_v=\bot$ (i.e., if $i_v$ is not yet set), $\deg(v)>\theta p$, and if there
  exists an $i^*\in [p]$ such that $v$ has at most $\delta_{\phi}$
  neighbors $u$ with $i_u=i^*$ and
  $\delta_{\phi}+\frac{\deg(v)}{p}\leq
  3\cdot|L_v\cap\calC_{i^*}|\cdot\frac{\deg(v)}{|L_v|}$,
  set $i_v := i^*$.
\end{enumerate}

\begin{lemma}\label{lemma:BNIspacereduction}
  Let $G=(V,E)$ be a graph with maximum degree at most $\Delta$ and
  neighborhood independence at most $\theta\geq 1$. 
  Assume that we are given a list coloring instance with lists
  $L_v\subseteq \calC$ for each  $v\in V$. For a parameter
  $\eta>1$, the above algorithm computes a weak
  $(\eta,3\theta,2\theta\eta)$-list color space reduction in
  $O(\eta^2\log\Delta)$ rounds deterministically in the \CONGEST
  model.
\end{lemma}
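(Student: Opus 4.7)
The plan is to verify each clause of \Cref{def:unorientedreduction} for the parameters $(\eta,3\theta,2\theta\eta)$ together with the claimed running time. By construction in Step~(1), the partition $\calC=\calC_1\cup\dots\cup\calC_p$ satisfies $|\calC_i|\le|\calC|/\eta$. For the running time, the initial call to \Cref{cor:BNI_defective} costs $O(\log^*\Delta)$ rounds, and each of the $h=O(\log\Delta)$ phases iterates through $q=O(p^2)=O(\eta^2)$ defective color classes. Within each class, every processed node only needs to read its neighbors' current values $i_u\in\{\bot\}\cup[p]$ and locally evaluate its two checks, so one iteration fits into $O(1)$ \CONGEST rounds with messages of $O(\log p)=O(\log\eta)$ bits, yielding the total bound $O(\eta^2\log\Delta)$.

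For the list-to-degree ratio clause, I would fix an arbitrary node $v$ that is assigned $i_v=i^*$ in phase $\phi$ during the processing of its own defective color class $x$, and bound the final $\deg'(v)$. Partition the neighbors of $v$ that ultimately have color subspace $i^*$ into three sets: $A$ (assigned in some iteration strictly before $(\phi,x)$), $B$ (assigned in the same iteration $(\phi,x)$, so sharing $v$'s defective color), and $C$ (assigned in iterations strictly after $(\phi,x)$). The first condition of the algorithm evaluated at $v$'s processing moment gives $|A|\le\delta_\phi$ directly. \Cref{cor:BNI_defective} applied with $\eps=1/p$ bounds $|B|$ by $\max\{\theta,\deg(v)/p\}$, and since the algorithm only processes $v$ when $\deg(v)>\theta p$ this maximum equals $\deg(v)/p$. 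For $C$, each $u\in C$ is processed at some later iteration $(\phi_u,y_u)$ with $\phi_u\ge\phi$ and must pass its own Check~1 at that moment, so that $v$ together with other previously assigned neighbors of $u$ of color $i^*$ sum to at most $\delta_{\phi_u}\le\delta_\phi$ when $u$ acts. Combining this sequential constraint with the bounded neighborhood independence applied to $N(u)\cap N(v)$, I would argue that $|A|+|B|+|C|\le\theta\,(\delta_\phi+\deg(v)/p)$. Substituting into Check~2, namely $\delta_\phi+\deg(v)/p\le3\,|L_v'|\deg(v)/|L_v|$, then delivers $|L_v'|/\deg'(v)\ge\frac{1}{3\theta}\cdot|L_v|/\deg(v)$.

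For the coverage clause, I want to show that every node $v$ with $\deg(v)>2\theta\eta$ obtains $i_v\ne\bot$ before termination. Since $p\le2\eta$ we have $\deg(v)>2\theta\eta\ge\theta p$, so $v$ is indeed considered in every phase until assigned. I would argue by phase-wise pigeonhole over $[p]$: at any phase $\phi$ in which $v$ is still unassigned, $\sum_{i^*}n_{i^*}(v)\le\deg(v)$ implies that at most $\deg(v)/\delta_\phi$ values of $i^*$ fail Check~1. Meanwhile, $\sum_{i^*}|L_v\cap\calC_{i^*}|=|L_v|$ combined with $p\le2\eta$ shows that once $\delta_\phi$ drops to order $\deg(v)/p$, the Check~2 threshold $(\delta_\phi+\deg(v)/p)|L_v|/(3\deg(v))$ lies below $|L_v|/p$, so a counting argument strictly bounds the number of $i^*$ failing Check~2. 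The explicit hypothesis $\deg(v)>2\theta\eta$ is exactly the slack that forces the two bad sets to miss some index of $[p]$ at this critical phase. Since $\delta_\phi$ halves each phase and reaches $1$ within $O(\log\Delta)$ phases, such a $v$ is certainly assigned.

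The main obstacle is the tight accounting for $|C|$ in the ratio bound. Each $u\in C$ constrains only its own neighborhood, not $v$'s, so a naive union bound over $u$ does not suffice; the argument must interleave the halving schedule on $\delta_\phi$ with the bounded neighborhood independence of $G$ to charge every excess neighbor in $C$ against a forthcoming violation of some $u$'s check, ultimately producing the multiplicative factor $\theta$. A secondary subtlety is calibrating the pigeonhole for the coverage clause: the constant $3$ in Check~2 is chosen precisely so that at the critical phase with $\delta_\phi$ near $\deg(v)/p$, the bad sets for the two checks fit strictly within $[p]$, leaving room for an assignable $i^*$.
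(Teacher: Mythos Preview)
Your running-time and color-space-size checks are fine, but both substantive clauses have real gaps.

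\textbf{Ratio clause.} The decomposition $S(v)=A\cup B\cup C$ and the bounds $|A|\le\delta_\phi$, $|B|\le\deg(v)/p$ are correct, but you do not have an argument for $|C|$. Check~1 at each $u\in C$ controls $u$'s own neighborhood, not how many such $u$ lie in $N(v)$; the vague charging you sketch does not produce the factor~$\theta$. The paper does \emph{not} try to bound $|C|$ directly. Instead it orients every edge of $G[S(v)]$ from the later-assigned endpoint to the earlier one (breaking simultaneous ties by ID) and shows that \emph{every} $u\in S(v)$ has out-degree at most $\delta_\phi+\deg(v)/p-1$: the earlier-assigned out-neighbors are bounded either by Check~1 at $u$ (if $u$ acts no earlier than $v$, so $\phi_u\ge\phi$) or by Check~1 at $v$ (if $u$ acts before $v$, since then all of $u$'s predecessors in $S(v)$ also precede $v$); the simultaneous out-neighbors are bounded by $\deg(v)/p-1$ via \Cref{cor:BNI_defective} applied inside $N(v)$. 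This acyclic orientation gives $\chi(G[S(v)])\le\delta_\phi+\deg(v)/p$, so $S(v)$ splits into that many independent sets of $G$, each of size at most $\theta$ by neighborhood independence. That is how the factor $\theta$ enters.

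\textbf{Coverage clause.} The phase-wise pigeonhole does not work. Take the extreme case $L_v\subseteq\calC_1$: then for every phase, $p-1$ indices fail Check~2, so your ``bad set for Check~2'' already has size $p-1$, and if the neighbors of $v$ are all headed toward $\calC_1$ then index~$1$ fails Check~1 at your critical phase $\delta_\phi\approx\deg(v)/p$. The hypothesis $\deg(v)>2\theta\eta$ gives no leverage here. The paper instead argues by contradiction using the \emph{final} counts $d_i=|\{u\in N(v):i_u=i\}|$: from $\sum_i(2d_i+\deg(v)/p)\le 3\deg(v)=3\frac{\deg(v)}{|L_v|}\sum_i\ell_i$ one finds an $i^*$ with $2d_{i^*}+\deg(v)/p\le 3\frac{\deg(v)}{|L_v|}\ell_{i^*}$, and then takes the \emph{last} phase $\phi^*$ with $d_{i^*}\le\delta_{\phi^*}$, so that $\delta_{\phi^*}\le 2d_{i^*}$. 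At that phase both checks pass for $i^*$ (the current count is at most the final $d_{i^*}\le\delta_{\phi^*}$), contradicting $i_v=\bot$. The point is that the correct phase depends on the particular $i^*$, not the other way around.
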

\begin{proof}
  We first show that the algorithm assigns an color subspace $i_v\in[p]$ to
  each node $v\in V$ with $\deg(v)>\theta p$. For the sake of contradiction, assume that there
  is a node $v\in V$ with $\deg(v)>\theta p$ to which the algorithm
  does not assign a color subspace
  $i_v\in[p]$. Node $v$ in considered in each phase $\phi$ and it
  chooses a color subspace in the first phase $\phi$ in which there is
  an $i^*\in [p]$ such that $v$ has at most $\delta_\phi$ neighbors
  $u$ with $i_u=i^*$ and for which
  \begin{equation}\label{eq:BNIassignmentcondition}
    \delta_{\phi} + \frac{\deg(v)}{p} \leq 3\cdot\frac{\deg(v)}{|L_v|}\cdot |L_{v}\cap\calC_{i^*}|.    
  \end{equation}
  If $v$ does not pick a color space, this condition is not satisfied
  for any phase $\phi$. To prove a contradiction, we consider the
  assignment to the neighbors $u\in N(v)$ of $v$ at the end of the
  algorithm. For each $i\in [p]$, let $d_i$ be the number of node
  $u\in N(v)$ for which $i_u=i$ at the end of the algorithm. Note that
  clearly $\sum_{i=1}^p d_i \leq |N(v)|=\deg(v)$. Further, for each
  $i\in[p]$, we define $\ell_i:=|L_v\cap\calC_i|$ to be the new list
  size of $v$ if $v$ sets $i_v=i$. We have
  \[
  \sum_{i=1}^p \left(2d_i+\frac{\deg(v)}{p}\right) \leq
  3\deg(v) = 
  3\cdot\frac{\deg(v)}{|L_v|}\cdot |L_v| = 
  3\cdot\frac{\deg(v)}{|L_v|}\cdot \sum_{i=1}^p \ell_i.
  \]
  Therefore, there exists an $i^*$ for which
  $2d_{i^*}+\frac{\deg(v)}{p} \leq
  3\frac{\deg(v)}{|L_v|}\cdot\ell_{i^*}$.
  Let $\phi^*$ be the last phase for which
  $d_{i^*}\leq \delta_{\phi^*}$. We have
  $\delta_{\phi^*}\leq 2 d_{i^*}$ and therefore
  $\delta_{\phi^*}+\frac{\deg(v)}{p}\leq3\cdot\frac{\deg(v)}{|L_v|}\cdot
  \ell_{i^*}$.
  Hence, if $i_v$ is still set to $\bot$ in phase $\phi^*$ when
  processing node $v$, the algorithm can set $i_v=i^*$ in phase $\phi^*$ (note that
  the number of neighbors $u\in N(v)$ with $i_u=i^*$ in phase $\phi^*$
  is at most $d_i\leq \delta_{\phi^*}$). This is a contradiction to
  the assumption that $v$ does not set its $i_v$ to a value in $[p]$
  and we have thus shown that at the end of the algorithm, all nodes
  $v$ with $\deg(v)>\theta p$ have $i_v\in[p]$.

  For the remainder of the proof, consider some node $v$ for which the
  algorithm sets $i_v=i^*$. Note that we have $\deg(v)>\theta p$ as the
  algorithm would otherwise not set the value $i_v\neq \bot$. Assume
  that node $v$ sets its value $i_v$ to $i^*$ in phase $\phi^*$. Let
  $S(v)$ be the set of neighbors $u\in N(v)$ of $v$ for which at the
  end of the algorithm $i_u=i_v=i^*$. We next bound the size of $S(v)$
  as 
  \begin{equation}
    \label{eq:degprime}
    |S(v)|\leq \theta\cdot\left(\delta_{\phi^*} + \frac{\deg(v)}{p}\right).    
  \end{equation}
  In order to prove \Cref{eq:degprime}, we consider the subgraph
  $G[S(v)]$ of $G$ induced by the nodes in $S(v)$ and we show that
  $G[S(v)]$ can be colored with a small number of colors. Together
  with the neighborhood independence of $G$, this implies that $S(v)$
  is not too large. To bound the chromatic number of $G[S(v)]$, we
  compute a low out-degree edge orientation of $G[S(v)]$. Consider two
  nodes $u,w\in S(v)$ such that $\set{u,w}\in E$. If $u$ and $w$ set
  $i_u=i_w=i^*$ at the same time, we orient the edge $\set{u,w}$ from
  the node with the lower ID to the node with the larger
  ID. Otherwise, we orient the edge from the node that sets its color
  subspace to $\calC_{i^*}$ later to the node that set its color
  subspace to $\calC_{i^*}$ first. Note that the resulting orientation
  is clearly acyclic. 

  To bound the outdegree of a node $u\in S(v)$, let us first bound the
  number of neighbors $w$ of $u$ in $G[S(v)]$ that set their color space
  to $\calC_{i^*}$ before $u$. Before node $v$ sets $i_v=i^*$, the
  number of nodes in $S(v)$ is at most $d_{\phi^*}$ and therefore for
  all nodes $u$ in $S(v)$ that choose the color subspace before $v$,
  the number of nodes $w\in S(v)$ that choose the color subspace
  before $u$ is bounded by $d_{\phi^*}$. Every node $u\in S(v)$ that
  sets $i_u=i^*$ together with $v$ or after $v$ does this in a phase
  $\phi\geq \phi^*$. Hence, for such nodes $u$, when $u$ sets $i_u$ to
  $i^*$, it has at most $d_{\phi}\leq d_{\phi^*}$ neighbors $w$ with
  $i_w=i^*$. For every node $u\in S(v)$, the number of neighbors $w\in
  G[S(v)]$ that set $i_w=i^*$ before $u$ sets $i_u=i^*$ is therefore
  upper bounded by $d_{\phi^*}$.

  We next bound the number of neighbors $w$ in $G[S(v)]$ of a node
  $u\in S(v)$ that choose their color subspace at the same time as
  $u$. Note that only nodes that have the same color in the defective
  coloring computed in step (3) of the algorithm can choose their
  color subspace together. Note that by \Cref{cor:BNI_defective} (and
  the choice of $\eps=1/p$), for each of the colors $x$ of the
  defective coloring, the number of nodes $u\in N(v)$ with color $x$
  is at most $\max\set{\theta,\deg(v)/p}\leq \deg(v)/p$ (because
  $\deg(v)>\theta p$). For each $u$ in $S(v)$, the number of nodes
  $w\in S(v)\setminus\set{u}$ that choose their color subspace
  together with $u$ is therefore upper bounded by
  $\deg(v)/p-1$. Consequently, the computed acyclic edge orientation
  of $G[S(v)]$ has out-degree at most $d_{\phi^*} + \deg(v)/p -1$.
  This implies that $G[S(v)]$ has chromatic number at most
  $d_{\phi^*} + \deg(v)/p$ and thus that the nodes in $S(v)$ can be
  partitioned into $k\leq d_{\phi^*} + \deg(v)/p$ independent sets (of
  $G$). Because the neighborhood independence of $G$ is bounded by
  $\theta$, each of these independent sets is of size at most $\theta$
  and \Cref{eq:degprime} therefore follows. It now follows directly by
  combining \Cref{eq:degprime} with \Cref{eq:BNIassignmentcondition}
  that the algorithm computes a weak $(\eta,3\theta,2\theta\eta)$-list
  color space reduction as claimed. The algorithm is deterministic and
  it can clearly be implemented in the \CONGEST model. The time
  complexity of the algorithm is dominated by iterating through the
  $O(\log\Delta)$ phases and the $O(p^2)=O(\eta^2)$ colors of the
  defective coloring in each phase.
\end{proof}

\begin{lemma}\label{lemma:recursiveBNI}
  Let $G=(V,E)$ be a graph with maximum degree $\Delta$ and
  neighborhood independence at most $\theta\geq 1$. Assume 
  that each node $v\in V$ is assigned a list $L_v\subseteq \calC$ of
  colors from some color space $\calC$ of size $|\calC|=C$. Assume
  further that an $O(\Delta^2)$-coloring of $G$ is given. Then, there is a
  deterministic distributed
  $O\big(\theta\cdot r\cdot C^{2/r}\log\Delta\big)$-round algorithm that computes
  a proper partial list coloring such that each node $v$
  with a list of size $|L_v| > (3\theta)^{r-1}\cdot \deg(v)$ is assigned a color.
\end{lemma}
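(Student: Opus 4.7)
The plan is to mimic the iterative strategy from the proof of \Cref{thm:mainrecursivefull}, but using the weak unoriented reduction from \Cref{lemma:BNIspacereduction} in place of the oriented reduction from \Cref{lemma:orientedreduction}. Setting $\eta:=\lceil C^{1/r}\rceil$, I would apply a weak $(\eta,3\theta,2\theta\eta)$-list color space reduction $r$ times, replacing $\eta$ by $\min\set{\eta,|\calC_{i-1}|}$ in the $i$-th step so that the color space collapses to size $1$ after $r$ applications. Writing $L_{i,v}$ and $\deg_i(v)$ for the effective list and effective degree of $v$ after $i$ reductions, an induction on $i$ (entirely parallel to the one in \Cref{thm:mainrecursivefull}) gives $|\calC_i|\leq\max\set{1,C/\eta^i}$ and, for each node that has been assigned a subspace in every previous step,
\[
\frac{|L_{i,v}|}{\deg_i(v)} \;\geq\; \frac{1}{(3\theta)^i}\cdot\frac{|L_v|}{\deg(v)}.
\]
Every node whose initial ratio exceeds the accumulated decay budget ends up with effective degree $0$ and a non-empty list lying in a size-$1$ subspace, so its unique remaining color conflicts with no neighbor and can be output safely.

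The key new subtlety over \Cref{thm:mainrecursivefull} is that \Cref{lemma:BNIspacereduction} is only \emph{weak}: it guarantees $i_v\in[p]$ just for nodes whose current effective degree exceeds $D=2\theta\eta$. A node $v$ that drops below this threshold has current effective degree at most $2\theta\eta$ and, by the inductive ratio bound, still carries a list comfortably larger than that, so I would park it at that level and assign it a color directly from its current $L_{i,v}$ (processed together with the ongoing weak-reduction step at no extra cost in rounds). The slack required to absorb this parking is what shaves one power from the decay budget, producing the hypothesis $|L_v|>(3\theta)^{r-1}\deg(v)$ rather than the naive $(3\theta)^r$, and the extra bookkeeping of coloring parked nodes in batches indexed by the $O(\theta)$ classes of low-degree behavior is the natural source of the factor $\theta$ in the running time.

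The time analysis is then immediate: each of the $r$ weak reductions costs $O(\eta^2\log\Delta)=O(C^{2/r}\log\Delta)$ rounds by \Cref{lemma:BNIspacereduction}, for a total of $O(r\cdot C^{2/r}\log\Delta)$, and the additional $\theta$ overhead from handling parked low-degree nodes gives the claimed $O(\theta\cdot r\cdot C^{2/r}\log\Delta)$. The algorithm inherits determinism and its \CONGEST implementation from \Cref{lemma:BNIspacereduction}, and the assumed $O(\Delta^2)$-coloring is exactly the precondition needed to invoke \Cref{cor:BNI_defective} inside each weak reduction. The main obstacle I anticipate is making the telescoping induction robust to parked nodes: since a parked node's neighbors may still be active in the recursion and later select colors in what was the parked node's subspace, care is needed to show that parking commits the node to a color that is consistent with all future subspace choices of its active neighbors, so that properness of the final partial coloring is preserved.
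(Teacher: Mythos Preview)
Your overall strategy---iterate the weak $(\eta,3\theta,2\theta\eta)$-reduction of \Cref{lemma:BNIspacereduction} for $r$ steps with $\eta=\lceil C^{1/r}\rceil$---matches the paper. The gap is exactly the obstacle you name at the end, and your proposed handling of it does not work. You suggest coloring a parked node $v$ ``together with the ongoing weak-reduction step,'' but at that moment $v$'s still-active neighbors have not yet been colored; any color you commit $v$ to may later be chosen by such a neighbor, and there is no mechanism in the reduction to make active nodes avoid colors taken by parked ones (the ratio invariant tracks degrees, not forbidden colors). Your sentence ``care is needed to show that parking commits the node to a color that is consistent with all future subspace choices of its active neighbors'' is precisely the thing that cannot be shown for this ordering.

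The paper resolves this by reversing the order: it phrases the argument as an explicit recursion on $r$, splits $V$ into low-degree nodes $V_L$ (degree $\le 2\theta\eta$) and high-degree nodes $V_H$, recurses on $V_H$ first, and only \emph{afterwards} colors $V_L$ greedily on $G[V_L]$ with lists obtained by deleting the colors of already-colored $V_H$-neighbors. Since $|L_v|>\deg_G(v)$ for the nodes we must color and each colored neighbor removes at most one color while reducing the residual degree by one, the list still exceeds the degree in $G[V_L]$, and a standard $(\deg+1)$-list coloring on a graph of maximum degree $2\theta\eta$ finishes in $O(\theta\eta+\log^*\Delta)$ rounds. In an iterative formulation this is simply: record at which level each node is parked, run all $r$ reductions, then color parked nodes in LIFO order (deepest level first). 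That ordering is what makes properness automatic.

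Two smaller points. First, your claim that a parked node's list is ``comfortably larger than $2\theta\eta$'' is not what the invariant gives; you only get $|L_{i,v}|>\deg_i(v)$, which is all that is needed for the greedy step. Second, the factor $\theta$ in the running time does not come from ``$O(\theta)$ classes of low-degree behavior'' but from the fact that $G[V_L]$ has maximum degree $2\theta\eta$, so the greedy coloring of the low-degree nodes costs $O(\theta\eta)=O(\theta C^{1/r})$ rounds at each of the $r$ levels.
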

\begin{proof}
  We first show that for any graph $G=(V,E)$ and an arbitrary
  parameter $d$, it is possible to compute a partial proper list
  coloring of the nodes $v\in V$ with $\deg(v)\leq d$ in time
  $O(d+\log^*\Delta)$. The partial list coloring assigns a color
  $v\in L_v$ to all such nodes with $|L_v|>\deg(v)$. Let $V_d$ be the
  set of nodes with $\deg(v)\leq d$ and let $G[V_d]$ be the subgraph
  of $G$ induced by the nodes in $V_d$. In the graph $G[V_d]$ every
  node has clearly degree at most $d$ and by (for example) using the
  algorithm of \cite{barenboim09,spaa09} and the initial
  $O(\Delta^2)$-coloring of $G$ (and thus also of $G[V_d]$), we can
  therefore compute a $(d+1)$-coloring of $G[V_d]$ in time
  $O(d + \log^*\Delta)$.  We can then iterate through the $d+1$ colors
  of this coloring and assign list colors greedily. When doing this,
  every node $v$ with $|L_v|>\deg(v)$ definitely succeeds in getting a
  color.

  We can now prove the lemma by induction on $r$. Let us first
  consider the case $r\leq 2$ as the base of the induction. In this case,
  we can spend time $O(C + \log^*\Delta)$ and since any node $v$ with
  $\deg(v)>|L_v|$ clearly has $\deg(v)< C$, for $r\leq 2$, the claim
  of the lemma directly follow from the above
  $O(d+\log^*\Delta)$-round algorithm.

  To prove the induction step, let us therefore consider some $r>2$
  and assume that the claim of the lemma holds for smaller $r$. We
  define $\eta := C^{1/r}$ and we partition the set of nodes $V$ into
  low-degree nodes $V_L$ and high-degree nodes $V_H$. We let $V_L$ be
  the set of node $v\in V$ for which $\deg(v)\leq 2\theta\eta$ and we
  accordingly define $V_H := V\setminus V_L$. We apply the above weak
  list color space reduction algorithm to $G$ with parameter
  $\eta$. The algorithm partitions the color space $\calC$ into
  subspaces $\calC_1,\dots,\calC_p$ of size at most $C^{(r-1)/r}$.
  Because the nodes in $V_H$ have degree more than $2\theta\eta$, by
  \Cref{lemma:BNIspacereduction}, all nodes $v\in V_H$ are assigned a
  subspace $\calC_{i_v}$ (for $i_v\in [p]$). Let $L_v'=L_v\cap \calC_{i_v}$ be the new
  list of a node $v\in V_H$ and let $\deg'(v)$ be the degree of $v$
  in the subgraph of $G$ induced by all the nodes in $V_H$ that were
  assigned to the same color subspace $\calC_{i_v}$. By
  \Cref{lemma:BNIspacereduction}, we have
  $|L'_v|/\deg'(v)\geq |L_v|/(3\theta\deg(v))$. Hence, all nodes $v\in V_H$
  for which $|L_v|>(3\theta)^{r-1}\cdot \deg(v)$, we have
  $|L_v'|>(3\theta)^{r-2}\cdot\deg'(v)$. For each of the color subspaces
  $\calC_i$, we can therefore apply the induction hypothesis to list
  color the subgraphs induced by the nodes $v\in V_H$ that chose
  $i_v=i$. The list coloring problem on each of these subgraphs has a
  color space of size at most $C^{(r-1)/r}$. Hence by applying
  the induction hypothesis for $r-1$, we can list 
  color each of those subgraphs in time
  $O((r-1)C^{2/r}\log\Delta)$. All nodes $v\in V_H$ with
  $|L_v'|>(3\theta)^{r-2}\deg'(v)$ and thus in particular all nodes $v\in V_H$
  with $|L_v|>(3\theta)^{r-1}\deg(v)$ are assigned a color.

  It therefore just remains to color the nodes $v\in V_S$ for which
  $|L_v|>(3\theta)^{r-1}\deg(v)$ and thus in particular
  $|L_v|>\deg(v)$. Note that by definition of $V_S$, the graph induced
  by the nodes in $V_S$ has degree at most $2\theta\eta$. Further, for
  each node $v\in V_S$ with $|L_v|>\deg(v)$, also for the remaining
  list coloring problem on $G[V_S]$, the list size of $v$ exceeds its
  degree (for each already colored neighbor in $V_H$, the degree of
  $v$ drops by $1$ and the list size of $v$ drops by at most
  $1$). Hence, the remaining problem on $G[V_S]$ can be solved in time
  $O(\theta\eta + \log^*\Delta)=O(\theta C^{1/r} + \log^*\Delta)$ by
  using the $O(d+\log^*\Delta)$ algorithm described in the first
  paragraph of this proof. Thus the claim of the lemma follows.
\end{proof}

Based on \Cref{lemma:recursiveBNI}, we can now prove \Cref{thm:BNI} the main theorem
of this section. The theorem is almost an immediate
consequence of \Cref{lemma:recursiveBNI,lemma:iterativedegree}.

\begin{proof}[\bf Proof of \Cref{thm:BNI}]
  We start by computing an $O(\Delta^2)$-coloring of $G$. By using the
  algorithm of \cite{linial92}, this can be done in time
  $O(\log^* n)$. Based on this coloring, we can now apply
  \Cref{lemma:recursiveBNI}. When setting
  $r=\Theta(\sqrt{\log\theta\cdot\log \Delta})$ and using that
  $C\leq\poly(\Delta)$, the lemma implies that there is an
  $O(r\theta
  C^{2/r}\log\Delta)=2^{O(\sqrt{\log\theta\cdot\log\Delta})}$-time
  algorithm that computes partial proper list coloring such that all
  nodes $v\in V$ with
  $|L_v|>(3\theta)^{r-1}\cdot
  \deg(v)=2^{\Theta(\sqrt{\log\theta\cdot\log\Delta})}\cdot \deg(v)$
  are assigned a color. The algorithm thus satisfies the requirements
  of \Cref{lemma:iterativedegree} for
  $S(G)=T(G)=2^{O(\sqrt{\log\theta\cdot\log\Delta})}$ and for graphs
  with neighborhood independence at most $\theta$, we get a
  $2^{O(\sqrt{\log\theta\cdot\log\Delta})}$-time deterministic
  distributed algorithm to solve an arbitrary list coloring instance
  with lists of size $|L_v|>\deg(v)$ for all $v\in V$ and colors from
  a range of size $\poly(\Delta)$.
\end{proof}


\section{Acknowledgements}

I am grateful to various colleagues for several discussions on 
distributed coloring algorithms that helped shaping my understanding of
the problem. I am particularly thankful to Mohsen Ghaffari, Yannic
Maus, and Jara Uitto for discussions on the relevance of
finding better list coloring algorithms even for relatively large list
sizes (such as the ones considered in \Cref{sec:recursive}).

\bibliographystyle{alphaabbr}
\bibliography{references}

\end{document}